\newtheorem{thm}{Theorem}
\newtheorem{defn}{Definition}
\newtheorem{lemma}[thm]{Lemma}
\newcommand{\subparagraph}[1]{\par {\em\underline{#1:}}}
\xdef\csname vec\x \endcsname{\noexpand\ensuremath{\noexpand\bm{\x}}}
\xdef\csname vec\x \endcsname{\noexpand\ensuremath{\noexpand\bm{\x}}}
\xdef\csname c\x \endcsname{\noexpand\ensuremath{\noexpand\mathcal{\x}}}
\xdef\csname bb\x \endcsname{\noexpand\ensuremath{\noexpand\mathbb{\x}}}
\newcommand{\Q}{\bbQ}
\renewcommand{\P}{\bbP}
\newcommand{\decfun}{\ensuremath{\psi}}
\newcommand{\dec}{\ensuremath{\widehat{H}}}
\newcommand{\tvd}[2]{\ensuremath{\|#1 - #2\|_\textup{TV}}}
\newcommand{\prot}{\ensuremath{\pi}}
\newcommand{\Prob}{\text{P}}
\newcommand{\type}[1]{T_{#1}}
\newcommand{\types}[2]{\cP_{#1}^{#2}}
\newcommand{\typeclass}[3]{T_{#1}^{#2}(#3)}
\newcommand{\expreg}{\cR}
\newcommand{\expc}{\alpha}
\newcommand{\expp}{\beta}
\newcommand{\binfun}{\Lambda}
\newcommand{\fand}{f_{\mathsf{AND}}}
\newcommand{\dist}[1]{\mathsf{p}\left(#1\right)}
\DeclarePairedDelimiterX{\infdivx}[2]{(}{)}{%
  #1\;\delimsize\|\;#2%
}
\DeclarePairedDelimiterX{\infdivxcond}[3]{(}{)}{%
  #1\;\delimsize\|\;#2\;|\;#3%
}
\newcommand{\kl}{D\infdivx}
\newcommand{\klc}{D\infdivxcond}
\newtheorem{eg}[thm]{Example}
\newtheorem{claim}[thm]{Claim}
\newcommand{\MM}[1]{{\textcolor{blue}{#1}}}
\def\BibTeX{{\rm B\kern-.05em{\sc i\kern-.025em b}\kern-.08em
    T\kern-.1667em\lower.7ex\hbox{E}\kern-.125emX}}
\begin{document}

\title{Private Two-Terminal Hypothesis Testing}

\author{\IEEEauthorblockN{Varun Narayanan}
\IEEEauthorblockA{TIFR, India}
\and
\IEEEauthorblockN{Manoj Mishra}
\IEEEauthorblockA{NISER, Bhubaneswar, HBNI}
\and
\IEEEauthorblockN{Vinod M. Prabhakaran}
\IEEEauthorblockA{TIFR, India}
}

\maketitle

\begin{abstract}

We study private two-terminal hypothesis testing with simple hypotheses where the privacy goal is to ensure that participating in the testing protocol reveals little additional information about the other user's observation when a user is told what the correct hypothesis is. We show that, in general, meaningful correctness and privacy cannot be achieved if the users do not have access to correlated (but, not common) randomness. We characterize the optimal correctness and privacy error exponents when the users have access to non-trivial correlated randomness (those that permit secure multiparty computation).

\end{abstract}

\begin{IEEEkeywords}
distributed hypothesis testing, secure multiparty computation, privacy, error exponents
\end{IEEEkeywords}

\section{Introduction}\label{sec:intro}

Hypothesis testing is a basic statistical inference problem with a long history~\cite{Pearson1900, Gosset1908, Fisher1925,NeymanP33}. Multi-terminal hypothesis testing, or distributed detection, where data is distributed in space, has also been widely studied, e.g.,~\cite{AhlswedeC86,Han87,Tsitsiklis93,HanA98} where the primary interest is the communication required to carry out hypothesis testing. More recently, there has been a renewed interest on this question, see, e.g.,~\cite{YHKim12,Wagner12,YHKim13,ZhangDJW13,Gunduz17,Wigger18,HanOW18,GDKR19,AcharyaCT19} and references therein. When data is distributedly observed by different users, a natural question of interest is whether inference can be carried out while providing some privacy for the users and what, if any, trade-offs exist between the accuracy of inference and privacy. Several recent works have explored this question mostly when there are a large number of users each of whom observe a small part of the data~\cite{DuchiJW13,KairouzBR16,Sheffet18,AcharyaCFT2019,AndoniMN18,GilaniBST19,AcharyaCT19,AliakbarpourDKR19,Gunduz18}.

We study two-terminal binary hypothesis testing~\cite{AhlswedeC86,Han87,HanA98} with simple hypotheses. Instead of restricting the amount of communication between the user as these works did, our focus will be on guaranteeing privacy to the two users. Our definition of privacy is inspired by the definition of secure multiparty (function) computation (MPC)~\cite{CramerDN15}. In 2-user MPC, the goal is for each user to learn little {\em additional} information as possible about the input and output of the other user than what the user can infer from its own input and output. In other words, the protocol reveals just enough information about the other user's data to compute the function, but not much more. i.e., compared to when the user is simply told the evaluation of the function, each user gains little additional knowledge about the other user's data. For the 2-user distributed binary hypothesis testing problem we study here, our privacy goal is to ensure that participating in the protocol reveals little additional information about the other user's observation when a user is told what the correct hypothesis is (see Definition~\ref{def:pbht}).

First we show that meaningful correctness and privacy cannot be achieved, in general, if the users do not have access to correlated (but, not just common) randomness. This is analogous to the well-known fact that two-user MPC is also impossible without such stochastic resources~\cite{Kus89,Bea89,MPR09}. Indeed, we demonstrate this by reducing two-user MPC of the binary {\sf AND} function to a two-terminal private binary hypothesis testing problem. 

Our main result is a trade-off between the optimal error and privacy error exponents in the setting where the users have access to any correlated randomness which permits MPC. We do this by effectively reducing the problem to MPC of the decision function. The optimal trade-off is thus the best trade-off possible when the users are simply given the output of the decision function by a genie (i.e., a trusted third party).

We note that Andoni et al.~\cite{AndoniMN18}, among other things, also studied two-terminal private hypothesis testing. Their definition of privacy is also inspired by MPC, but is different from ours. They deem a protocol private as long as it is a MPC of any decision function with a good probability of error performance. This may not allow comparison of the privacy of different protocols, e.g., two decision functions may have similar error performances, but, they may have different worst-case privacy performance measured in terms of the information the decision function itself reveals to a user, conditioned on its observation, about the other user's observation. Here, we define privacy from first principles and arrive at MPC as a means to achieve it. Our definition also allows us to compare the privacy of protocols and obtain the optimal trade-off between the error and privacy error exponents.

\section{Notation}\label{sec:notation}

When $X$ is a random variable and $E$ is an event, $\dist{X | E}$ represents the distribution induced by $X$ conditioned on event $E$.
When $(X, Y)$ is jointly distributed, and $E$ is an event, $\dist{X | Y, E}$ is a distribution over distributions on $\cX$, such that the distribution $\dist{X | Y = y, E}$ occurs with probability $\Prob(Y = y | E)$.
Total variation distance between distributions $\P_{XY}$ and $\Q_{XY}$ is denoted by $\tvd{\P_{XY}}{\Q_{XY}}$.

We use the method of types and follow the notation of Cover and Thomas~\cite{CoverT06} .
The \emph{type of a sequence} $x^n \in \cX^n$ is denoted by $\type{x^n}$.
The \emph{set of types of} $\cX^n$ is $\types{\cX}{n} = \{\type{x^n}: x^n \in \cX^n\}$.
Finally, for $\P_X \in \types{\cX}{n}$, \emph{type class of} $\P_X$, denoted by $\typeclass{\cX}{n}{\P_X}$, is the set of all sequences of type $\P_X$, \emph{i.e.,} $\typeclass{\cX}{n}{\P_X} = \{x^n \in \cX^n : \type{x^n} = \P_X\}$.

\section{Problem Statement}

In a distributed binary hypothesis testing problem of sample size $n$, Alice and Bob observe $X^n$ and $Y^n$, respectively, where $(X^n,Y^n)$ is drawn independent and identically distributed (i.i.d.) according to the distribution $\P^0_{XY}$ under the null hypothesis $\Theta = 0$ and $\P^1_{XY}$ under the alternate hypothesis $\Theta = 1$.
Here, random variable $\Theta \in \{0,1\}$ represents the true hypothesis.
Independent of the observations, Alice and Bob have access to potentially dependent random variables $W_A,W_B$, respectively.
Alice and Bob engage in an interactive communication protocol $\prot_n$ in which they take turns exchanging messages with each other.
Messages produced by each user is a function of their observation, randomness, and messages exchanged so far.
At the end of the protocol, both users output their decision.
Let $V_A$ (resp., $V_B$) denote the {\em view} of Alice (resp., Bob) at the end of the protocol; this is the collection of the observations $X^n$, randomness $W_A$ and the transcript.
Let the decision function of Alice (resp., Bob) be denoted by $\decfun_A:\cV_A\rightarrow\{0,1\}$ (resp., $\decfun_B:\cV_B\rightarrow\{0,1\}$) and the decision itself by $\dec_A=\decfun_A(V_A)$ (resp., $\dec_B$).
Note that for $\prot_n$ to be a valid protocol, it must satisfy the natural conditional independence statement that each message produced by a user must be a function of what the users knows when it is produced.  
In this exposition, we consider the \emph{honest but curious} model of security in which Alice and Bob are obliged to follow the protocol honestly but can be curious, in that, they might try to infer the other user's observation from their respective views.

The protocol naturally induces the following joint distribution $\dist{\Theta, X^n, Y^n, W_A, W_B, V_A, V_B, \dec_A, \dec_B}$.
For $i, j \in \{0, 1\}$, when $I$ is the indicator function, this distribution can be described as follows.
\begin{multline}\label{eqn:prot-dist}
	\Prob(\theta, x^n, y^n, w_A, w_B, v_A, v_B, \dec_A = i, \dec_B = j)\\
	= \Prob_{\Theta}(\theta) \cdot \Prob_{X^n, Y^n | \Theta}(x^n, y^n | \theta)	\cdot \Prob_{W_A, W_B}(w_A, w_B)\\
	\cdot \Prob_{\prot_n}(v_A, v_B | x^n, y^n, w_A, w_B)\\
	\cdot I(\decfun_A(v_A) = i)	\cdot I(\decfun_B(v_B) = j).
\end{multline}

\begin{defn}[Private Binary Hypothesis Testing]\label{def:pbht}
A protocol $\prot_n$ is said to be an {\em $(n, \delta, \mu)$-private distributed binary hypothesis testing protocol} if for a sample size $n$, the distribution induced by $\prot_n$ given in~\eqref{eqn:prot-dist} satisfies $\delta$-correctness and $\mu$-privacy conditions given below.

\paragraph{Correctness} For $\delta \ge 0$, $\prot_n$ is said to be $\delta$-correct if
\begin{align}
\Prob\left( \dec_A = 1 - \theta | \Theta = \theta \right) \leq \delta \label{eqn:corr-A},\\
\Prob\left( \dec_B = 1 - \theta | \Theta = \theta \right) \leq \delta \label{eqn:corr-B}.
\end{align}

\paragraph{Privacy} For $\mu \ge 0$, $\prot_n$ is said to be $\mu$-private if for $\theta = 0, 1$,
\begin{multline}
\Prob\left( \tvd{\dist{Y^n | x^n, \theta}}{\dist{Y^n | V_A, x^n, w_A, \theta}} \ge \mu \right) \leq \mu,\\ \forall\, x^n, w_A\label{eqn:priv-A}
\end{multline}
\begin{multline}
\Prob\left( \tvd{\dist{X^n | y^n, \theta}}{\dist{X^n | V_B, y^n, w_B, \theta}} \ge \mu \right) \leq \mu,\\ \forall\, y^n, w_B. \label{eqn:priv-B}
\end{multline}
\end{defn}
\begin{defn}[{\em Weakly} Private Binary Hypothesis Testing]\label{def:weakprivacy}
A protocol $\prot_n$ is said to be an {\em $(n, \delta, \mu)$-weakly private distributed binary hypothesis testing protocol} if it satisfies $\delta$-correctness conditions ~\eqref{eqn:corr-A}~and~\eqref{eqn:corr-B}, and $\mu$-weak privacy conditions given below.
\paragraph{Weak privacy} For $\mu \ge 0$, $\prot_n$ is said to be $\mu$-weakly private if for $\theta = 0, 1$,
\begin{align}
\Prob\left( \tvd{\dist{Y^n | X^n, \theta}}{\dist{Y^n | V_A, X^n, \theta}} \ge \mu \right) \leq \mu,\label{eqn:weak-priv-A}\\
\Prob\left( \tvd{\dist{X^n | Y^n, \theta}}{\dist{X^n | V_B, Y^n, \theta}} \ge \mu \right) \leq \mu.\label{eqn:weak-priv-B}
\end{align}
\end{defn}
It is clear that an $(n, \delta, \mu)$-private protocol is also $(n, \delta, \mu)$-weakly private.
\begin{defn}
A pair $(\expc, \expp) \in \mathbb{R}^2_+$ is said to be \emph{achievable} if there exists a sequence of $(n,\delta_n,\mu_n)$-private protocols such that
\begin{align*}
	\limsup \limits_{n \rightarrow \infty} -\frac{1}{n} \log{\delta_n} &\ge \expc,\\
	\limsup \limits_{n \rightarrow \infty} -\frac{1}{n} \log{\mu_n} &\ge \expp.
\end{align*}
The closure of the set of all achievable pairs is the {\em correctness-privacy error exponent region} $\expreg$.
\end{defn}

\section{Results}\label{sec:results}

If Alice and Bob do not have access to correlated random variables, i.e., $W_A,W_B$ are independent, we argue using the following example that meaningful private hypothesis testing may not be possible. This will imply that the same holds true even if users share common randomness in addition since the honest-but-curious users may share part of their private randomness at the outset.
\begin{eg}\label{eg}
	Let $X, Y$ be distributed \emph{i.i.d.} $\sim \text{Bernoulli}(\frac{1}{2})$ under the null hypothesis $(\Theta = 0)$, and $X = Y \sim \text{Bernoulli}(\frac{1}{2})$ under the alternate hypothesis $(\Theta = 1)$.
\end{eg}
For this example, we show that it is impossible to realize even weakly private hypothesis testing for small correctness and privacy error using arbitrarily large number of samples.
We do this by providing a black box reduction of the statistically secure 2-party computation of a function, which is known to be impossible, to weakly private hypothesis testing protocol.
This proof can be extended to show the impossibility of weakly private hypothesis testing of independence in general, \emph{i.e.,} the null hypothesis is a joint distribution $\P_{XY}$ and the alternate hypothesis the independent distribution $\P_X \cdot \P_Y$.
The result is formally stated in the following theorem.

\begin{thm}\label{thm:no-setup-impossibility}
	For the hypothesis testing problem described in Example~\ref{eg}, $(n, \delta, \mu)$-weakly private distributed hypothesis testing is impossible for all $n \in \mathbb{N}$, when $\delta, \mu \le \frac{1}{12}$.
\end{thm}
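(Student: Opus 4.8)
The plan is to carry out the reduction announced in Section~\ref{sec:intro}: from an $(n,\delta,\mu)$-weakly private protocol for the instance of Example~\ref{eg} build a statistically secure, honest-but-curious, setup-free two-party protocol for the binary AND function $\fand(a,b)=a\wedge b$, and then invoke the classical impossibility of such a protocol \cite{Kus89,Bea89,MPR09}. So assume toward a contradiction that for some $n$ a protocol $\prot_n$ is $(n,\delta,\mu)$-weakly private with $\delta,\mu\le\tfrac{1}{12}$. For Example~\ref{eg} the two hypotheses are ``$X^n=Y^n$'' ($\Theta=1$) and ``$X^n,Y^n$ independent and uniform'' ($\Theta=0$), so $\prot_n$ is essentially a private test of string equality that uses only independent local randomness --- exactly the resource regime in which secure AND is ruled out.

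First I would fix the encoding of an AND instance. On inputs $a$ and $b$, Alice and Bob each turn their bit into a hypothesis-testing input using only local randomness so that the two inputs coincide exactly when $a=b=1$ and coincide only with probability $2^{-n}$ in the other three cases --- for instance, each party submits a fixed, publicly known string when its bit is $1$ and a fresh uniform string when its bit is $0$. They run $\prot_n$ on these inputs and output its decision. The second step is to check that $\delta$-correctness, conditions~\eqref{eqn:corr-A}--\eqref{eqn:corr-B}, makes this AND protocol err with probability $O(\delta)+2^{-n}$, and the third step is to check that the weak-privacy conditions~\eqref{eqn:weak-priv-A}--\eqref{eqn:weak-priv-B} translate into the semi-honest privacy requirement for $\fand$ (Alice learns essentially nothing about $b$ beyond $a\wedge b$, and symmetrically for Bob) with privacy parameter $O(\mu)$. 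For $\delta,\mu\le\tfrac1{12}$ both quantities fall below the threshold for which secure AND is known to be impossible, which is the desired contradiction; the same template, with ``independent and uniform'' replaced by a product distribution $\P_X\cdot\P_Y$ and ``equal'' by a generic joint distribution $\P_{XY}$, then yields impossibility of weakly private testing of independence.

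The crux --- and the main obstacle --- is steps two and three: the correctness and weak-privacy guarantees of $\prot_n$ are promised only when its inputs are distributed as $\P^0_{XY}$ or $\P^1_{XY}$, whereas the encoding above produces inputs that need not follow either distribution (indeed, two users with only independent randomness cannot jointly realize the $\Theta=1$ distribution at all, which is precisely what the theorem is exploiting). The encoding must therefore be designed so that, on every branch, the induced inputs are either genuinely $\P^0_{XY}$/$\P^1_{XY}$-distributed or close enough that the guarantees still transfer, and the resulting constants must be tracked with care; this accounting is what fixes the explicit cutoff $\tfrac1{12}$. The intuition behind both the reduction and a direct argument is the same: any protocol that reliably detects equality must let a user's decision reveal something about the other user's observation, which the privacy requirement forbids.
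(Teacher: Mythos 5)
Your overall strategy---reduce setup-free, honest-but-curious secure computation of $\fand$ to weakly private testing of Example~\ref{eg} and invoke a known impossibility---is indeed the paper's strategy, but your reduction itself has a genuine gap, which you flag as ``the main obstacle'' and then leave unresolved; it is exactly the part that carries the proof. Your encoding (a fixed public string for input bit $1$, a fresh uniform string for bit $0$) feeds $\prot_n$ inputs that match neither $\P^0_{XY}$ nor $\P^1_{XY}$ in three of the four cases $(u,v)$, so neither the correctness nor the weak-privacy guarantees of $\prot_n$ apply to them, and no transfer argument is supplied. Moreover, your parenthetical claim that the $\Theta=1$ distribution ``cannot be realized at all'' with independent randomness overlooks that the reduction protocol is allowed to communicate: the paper's fix is precisely to have Alice first sample $Z^n$ uniformly and send it to Bob, after which each party uses $Z^n$ as its input to $\prot_n$ if its bit is $1$ and a fresh private uniform string otherwise. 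With this encoding the inputs to $\prot_n$ are distributed \emph{exactly} as $\P^1_{XY}$ when $u=v=1$ and exactly as $\P^0_{XY}$ in the other three cases, so $\delta$-correctness transfers verbatim (no $2^{-n}$ slack, no ``close enough'' argument is needed).

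The second missing piece is the privacy transfer and the constant bookkeeping that your proposal defers. In the paper, security against Alice reduces to comparing her views in the cases $(u,v)=(0,0)$ and $(0,1)$: in both, the pair fed to $\prot_n$ is independent and uniform (i.e., distributed as under $\Theta=0$), and the only difference is whether Bob's input equals the $Z^n$ that Alice already holds; Claim~\ref{clm:eg-priv} shows, via an averaged form of the weak-privacy condition under $\Theta=0$ (Lemma~\ref{lem:avg-privacy}), that $\tvd{\dist{Z^n, V_A^{\prot}(\hat{X}^n, Z^n)}}{\dist{Z^n, V_A^{\prot}(\hat{X}^n,\hat{Y}^n)}}\le 2\mu$, and symmetrically for Bob. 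This yields a $\max(\delta,2\mu)$-secure protocol for $\fand$, and the explicit threshold $\tfrac{1}{6}$ in the AND impossibility (Theorem~\ref{thm:and-impossible}, which the paper cites in this quantitative form rather than the qualitative completeness results) is what produces the stated cutoff $\delta,\mu\le\tfrac{1}{12}$. Since your proposal supplies neither a working encoding, nor the privacy translation, nor the constants, the theorem as stated is not established by it.
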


To enable private hypothesis testing we will assume in the sequel that Alice and Bob have access to independent copies of non-trivially correlated random variables $W_A, W_B$.
By non-trivial correlations we mean the large class of correlations~\cite{Kus89,Bea89,MPR09} that are complete for secure 2-party computation. Without loss of generality, we consider $W_A, W_B$ which are independent copies of oblivious transfer correlations (Definition~\ref{def:ot}) in our positive results.

\begin{figure*}[h]

\begin{subfigure}{.32\textwidth}
\setlength{\unitlength}{1cm}
\centering
\begin{tikzpicture}[scale=0.6, every node/.style={scale=0.6}]

\def\x{0}
\def\y{0}
\def\ht{5}
\def\len{9}
\def\Ctx{\x+3}
\def\CtxSh{\x+6}
\def\Cty{\y+2}
\def\rx{2.5}
\def\ry{2}
\def\linelen{2.1}

\def\P{\mathbb{P}^0_{XY}}
\def\Q{\mathbb{P}^1_{XY}}
\def\T{\mathbb{Q}_{XY}}
\def\linelabel{\scriptstyle{\alpha}}

\draw (\x,\y-0.5) rectangle ++(\len,\ht);
\draw (\Ctx,\Cty) circle [x radius=\rx, y radius=\ry];
\draw (\CtxSh,\Cty) circle [x radius=\rx, y radius=\ry];

\node (H0) at (\Ctx - 0.2,\Cty) {$\P$}; 
\node (H1) at (\CtxSh + 0.2,\Cty) {$\Q$}; 
\node (D0) at (\Ctx+1.5,\Cty-1) {$\T$}; 

\node[circle,fill=black,inner sep=0pt,minimum size=2pt,below=1pt] (H0Dot) at (H0) {}; 
\node[circle,fill=black,inner sep=0pt,minimum size=2pt,below=1pt] (H1Dot) at (H1) {}; 
\node[circle,fill=black,inner sep=0pt,minimum size=2pt,above=7pt] (D0Dot) at (D0) {}; 

\draw[->,red] (H0Dot) -- node[above, sloped] {$\linelabel$} +(210:\linelen);
\draw[->,red] (H1Dot) -- node[above, sloped] {$\linelabel$} +(330:\linelen);

\end{tikzpicture}

\end{subfigure}%
\begin{subfigure}{.32\textwidth}


\setlength{\unitlength}{1cm}
\centering
\begin{tikzpicture}[scale=0.6, every node/.style={scale=0.6}]

\def\x{0}
\def\y{0}
\def\ht{5}
\def\len{9}
\def\Ctx{\x+2}
\def\CtxSh{\x+7}
\def\Cty{\y+2}
\def\CtxTlt{\x+3}
\def\CtxTltSh{\x+6}
\def\CtyTlt{\y+3.5}
\def\rx{1.5}
\def\ry{2}
\def\rxTlt{2}
\def\ryTlt{0.75}

\def\P{\mathbb{P}^0_{XY}}
\def\Q{\mathbb{P}^1_{XY}}
\def\TP{\mathbb{Q}_X \mathbb{P}^0_{Y|X}}
\def\TQ{\mathbb{Q}_X \mathbb{P}^1_{Y|X}}
\def\T{\mathbb{Q}_{XY}}

\def\Pxloc{2}
\def\Pyloc{1.3}

\def\linelen{1.55}
\def\linelabel{\scriptstyle{\alpha}}

\def\linelenNew{1.6}
\def\linelabelNew{\scriptstyle{\beta}}

\draw (\x,\y-0.5) rectangle ++(\len,\ht);
\draw (\Ctx,\Cty) circle [x radius=\rx, y radius=\ry];
\draw (\CtxSh,\Cty) circle [x radius=\rx, y radius=\ry];
\draw (\CtxTlt,\CtyTlt) circle [x radius=\rxTlt, y radius=\ryTlt, rotate=10];
\draw (\CtxTltSh,\CtyTlt) circle [x radius=\rxTlt, y radius=\ryTlt, rotate=-10];

\node (H0) at (\Ctx,\Cty) {$\P$}; 
\node (H1) at (\CtxSh,\Cty) {$\Q$}; 
\node (M1) at (\Pxloc+5,\Pyloc+2) {$\TQ$}; 
\node (M0) at (\Pxloc,\Pyloc+2) {$\TP$}; 
\node (D0) at (\Pxloc+2.5,\Pyloc+2.4) {$\T$}; 

\node[circle,fill=black,inner sep=0pt,minimum size=2pt,below=1pt] (H0Dot) at (H0) {}; 
\node[circle,fill=black,inner sep=0pt,minimum size=2pt,below=1pt] (H1Dot) at (H1) {}; 
\node[circle,fill=black,inner sep=0pt,minimum size=2pt,below=1pt] (M0Dot) at (M0) {}; 
\node[circle,fill=black,inner sep=0pt,minimum size=2pt,below=1pt] (M1Dot) at (M1) {}; 
\node[circle,fill=black,inner sep=0pt,minimum size=2pt,below=1pt] (D0Dot) at (D0) {}; 

\draw[->,red] (H0Dot) -- node[above, sloped] {$\linelabel$} +(210:\linelen);
\draw[->,red] (H1Dot) -- node[above, sloped] {$\linelabel$} +(330:\linelen);
\draw[->,red] (M0Dot) -- node[above, sloped, pos=0.6] {$\linelabelNew$} +(348:\linelenNew);
\draw[->,red] (M1Dot) -- node[above, sloped, pos=0.6] {$\linelabelNew$} +(192:\linelenNew);

\end{tikzpicture}

\end{subfigure}
\begin{subfigure}{.32\textwidth}


\setlength{\unitlength}{1cm}
\centering
\begin{tikzpicture}[scale=0.6, every node/.style={scale=0.6}]

\def\x{0}
\def\y{0}
\def\ht{5}
\def\len{9}
\def\Ctx{\x+2}
\def\CtxSh{\x+7}
\def\Cty{\y+2}
\def\CtxTlt{\x+4.5}
\def\CtyTlt{\y+3.5}
\def\rx{1.5}
\def\ry{2}
\def\rxTlt{4}
\def\ryTlt{0.75}

\def\P{\mathbb{P}^0_{XY}}
\def\Q{\mathbb{P}^1_{XY}}
\def\TQ{\mathbb{Q}_X \mathbb{P}^1_{Y|X}}
\def\T{\mathbb{Q}_{XY}}

\def\Pxloc{2}
\def\Pyloc{1.5}

\def\linelen{1.55}
\def\linelabel{\scriptstyle{\alpha}}

\def\linelenNew{3}
\def\linelabelNew{\scriptstyle{\beta}}

\draw (\x,\y-0.5) rectangle ++(\len,\ht);
\draw (\Ctx,\Cty) circle [x radius=\rx, y radius=\ry];
\draw (\CtxSh,\Cty) circle [x radius=\rx, y radius=\ry];
\draw (\CtxTlt,\CtyTlt) circle [x radius=\rxTlt, y radius=\ryTlt, rotate=0];

\node (H0) at (\Ctx,\Cty) {$\P$}; 
\node (H1) at (\CtxSh,\Cty) {$\Q$}; 
\node (M1) at (\Pxloc+5,\Pyloc+2) {$\TQ$}; 
\node (M0) at (\Pxloc,\Pyloc+2) {$\T$}; 

\node[circle,fill=black,inner sep=0pt,minimum size=2pt,below=1pt] (H0Dot) at (H0) {}; 
\node[circle,fill=black,inner sep=0pt,minimum size=2pt,below=1pt] (H1Dot) at (H1) {}; 
\node[circle,fill=black,inner sep=0pt,minimum size=2pt,below=1pt] (M0Dot) at (M0) {}; 
\node[circle,fill=black,inner sep=0pt,minimum size=2pt,below=1pt] (M1Dot) at (M1) {}; 

\draw[->,red] (H0Dot) -- node[above, sloped] {$\linelabel$} +(210:\linelen);
\draw[->,red] (H1Dot) -- node[above, sloped] {$\linelabel$} +(330:\linelen);
\draw[->,red] (M1Dot) -- node[above, sloped] {$\linelabelNew$} +(192:\linelenNew);

\end{tikzpicture}

\end{subfigure}%
\caption{From left to right, the diagrams illustrate the conditions (i), (ii) and (iv) (for $\theta = 0$) in Theorem~\ref{thm:error-exponent}, respectively.
In the figure, for $\theta = 0, 1$, the set around $\P^{\theta}_{XY}$ encloses all distributions $\mathbb{T}_{XY}$ s.t. $\kl{\mathbb{T}_{XY}}{\P^{\theta}_{XY}} \le \expc$, and the set around $\Q_X \cdot \P^{\theta}_{Y|X}$ encloses all conditional distributions $\mathbb{T}_{Y|X}$ s.t. $\klc{\mathbb{T}_{Y|X}}{\P^{\theta}_{Y|X}}{\Q_X} \le \beta$.
Note that, for any $\Q_X$, $\kl{\Q_X \cdot \P^{\theta}_{Y|X}}{\P^{\theta}_{XY}} = \kl{\Q_X}{\P^{\theta}_X}$.}
\label{fig}
\end{figure*}


Our main result is the following characterization of the correctness-privacy error exponent region:
\begin{thm}\label{thm:error-exponent}
For the binary distributed hypothesis testing problem, correctness-privacy error exponent $(\expc, \expp)$ is achievable if and only if the following conditions are satisfied.
There exists no distribution $\Q_{XY}$ for which any of the following conditions hold,
\begin{enumerate}
\item[(i).] 
\begin{align}
	\kl{\Q_{XY}}{\P^{\theta}_{XY}} \le \expc, \forall \theta \in \{0, 1\}.\label{eqn:char-0}
\end{align}
\item[(ii).] 
\begin{align}
	\kl{\Q_X}{\P^\theta_X} & \le \expc, \nonumber\\
	\text{and } \klc{\Q_{Y|X}}{\P^\theta_{Y|X}}{\Q_X} & \le \expp, \forall \theta \in \{0, 1\}. \label{eqn:char-2-A}
\end{align}
\item[(iii).] 
\begin{align}
	\kl{\Q_Y}{\P^\theta_Y} & \le \expc, \nonumber\\
	\text{and } \klc{\Q_{X|Y}}{\P^\theta_{X|Y}}{\Q_Y} & \le \expp,  \forall \theta \in \{0, 1\}. \label{eqn:char-2-B}
\end{align}
\item[(iv).] 
\begin{align}
	\kl{\Q_{XY}}{\P^\theta_{XY}} \le \expc, \kl{\Q_X}{\P^{1 - \theta}_X} \le \expc,\nonumber\\
	\text{ and } \klc{\Q_{Y|X}}{\P^{1 - \theta}_{Y|X}}{\Q_X} \le \expp, \forall \theta \in \{0, 1\}. \label{eqn:char-1-A}
\end{align}
\item[(v).] 
\begin{align}
	\kl{\Q_{XY}}{\P^\theta_{XY}} \le \expc, \kl{\Q_Y}{\P^{1 - \theta}_Y} \le \expc,\nonumber\\
	\text{ and } \klc{\Q_{X|Y}}{\P^{1 - \theta}_{X|Y}}{\Q_Y} \le \expp, \forall \theta \in \{0, 1\}. \label{eqn:char-1-B}
\end{align}
\end{enumerate}
\end{thm}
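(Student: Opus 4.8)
The plan is to prove Theorem~\ref{thm:error-exponent} by reducing the private hypothesis testing problem to secure multiparty computation (MPC) of a suitable decision function, and then to characterize exactly when such a decision function with the desired error exponents exists. The intuition, as stated in the introduction, is that the optimal privacy-preserving protocol should reveal nothing more than what a genie computing some decision function $g_n : \cX^n \times \cY^n \to \{0,1\}$ would reveal. Since $W_A, W_B$ are complete for MPC, any such $g_n$ can be securely computed, achieving $\mu$-privacy essentially for free (up to the simulation error of the MPC protocol, which can be made exponentially small, or even zero for perfectly secure protocols against honest-but-curious adversaries with OT correlations). Conversely, any $\mu$-private protocol's decision rule must ``look like'' a genie-aided computation of some function, since the privacy condition~\eqref{eqn:priv-A}--\eqref{eqn:priv-B} forces the view to be (approximately) simulatable from the observation and the decision alone.

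\emph{Achievability.} Given $(\expc, \expp)$ satisfying the stated conditions, I would first exhibit a sequence of deterministic decision functions $g_n$, defined via thresholding on types: essentially, $g_n(x^n, y^n) = 0$ iff the joint type $\type{(x^n,y^n)}$ lies in a suitable "acceptance region" $\cR_n \subseteq \types{\cX \times \cY}{n}$ built from the Kullback--Leibler geometry in Figure~\ref{fig}. The correctness error under $\P^\theta$ is controlled by Sanov's theorem: $\Prob(\dec \ne \theta \mid \Theta = \theta) \doteq \exp(-n \min_{\Q} \kl{\Q_{XY}}{\P^\theta_{XY}})$ where the minimum is over types on the wrong side of $\cR_n$; the "no $\Q$ with condition (i)" constraint is exactly what guarantees this exponent is $\ge \expc$ for both $\theta$. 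Then both Alice and Bob run a statistically-secure MPC protocol for $g_n$ using their OT correlations, and output $g_n(X^n, Y^n)$. The privacy analysis is where conditions (ii)--(v) enter: the view $V_A$ of Alice is simulatable from $(X^n, \dec_A)$, so $\dist{Y^n \mid V_A, x^n, w_A, \theta} \approx \dist{Y^n \mid x^n, \dec_A, \theta}$, and I must show this conditional distribution is TV-close to $\dist{Y^n \mid x^n, \theta}$ except with small probability. The bad event is that conditioning on $\dec_A$ significantly shifts the conditional law of $Y^n$ given $X^n = x^n$; quantifying this via types, a shift of exponent $\expp$ in the conditional law corresponds to a type $\Q_{Y|X}$ with $\klc{\Q_{Y|X}}{\P^\theta_{Y|X}}{\Q_X} \le \expp$, and conditions (ii), (iv) (and (iii), (v) for Bob) rule out exactly the $\Q_X$-marginals that are simultaneously likely (exponent $\le \expc$ under some relevant hypothesis) and cause such a shift. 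I would carefully split into the cases where $\dec_A$ equals the true $\theta$ (governed by (ii)/(iii)) versus where $\dec_A$ is the wrong value — which happens only on a rare event of exponent $\ge \expc$, but conditioned on it the shift must still be controlled, yielding (iv)/(v).

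\emph{Converse.} Here I would take any sequence of $(n, \delta_n, \mu_n)$-private protocols with $-\frac1n\log\delta_n \to \expc'$ and $-\frac1n\log\mu_n \to \expp'$ and show $(\expc', \expp')$ satisfies all five conditions. Suppose, for contradiction, some $\Q_{XY}$ violates, say, condition (ii) for the achieved $(\expc', \expp')$. Then under $\P^\theta$ the type class of $\Q_X$ has probability $\doteq \exp(-n\kl{\Q_X}{\P^\theta_X})$, which decays slower than $\delta_n$, so a nonvanishing fraction of $x^n \in \typeclass{\cX}{n}{\Q_X}$ must have $\dec_A$ correct. For such an $x^n$, the conditional law of $Y^n$ given $X^n = x^n$ under $\P^\theta$ puts weight $\doteq \exp(-n\klc{\Q_{Y|X}}{\P^\theta_{Y|X}}{\Q_X})$ on the type class of $\Q_{Y|X}$ — again decaying slower than $\mu_n$; one then argues, using a Neyman--Pearson / change-of-measure argument, that whether $y^n$ lands in this type class is (to within exponent $\expp'$) \emph{not} determined by the decision, hence must be leaked in $V_A$ to achieve correctness — contradicting $\mu_n$-privacy. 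The analogous arguments using the "wrong hypothesis" marginals handle (iv), (v), and $\cX \leftrightarrow \cY$ symmetry handles (iii), (v); condition (i) is the standard converse for the probability of error in the two-hypothesis testing problem itself, independent of privacy.

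\emph{Main obstacle.} I expect the crux to be the privacy analysis in both directions: translating the TV-distance conditions~\eqref{eqn:priv-A}--\eqref{eqn:priv-B}, which are about the \emph{view}, into clean statements about what a \emph{single decision bit} can and cannot reveal, and then matching that against the delicate multi-case KL-geometry of conditions (i)--(v). In particular, the interplay in (iv)/(v) — where one must simultaneously have a type that is moderately likely under $\P^\theta_{XY}$ (exponent $\le \expc$) \emph{and} has a marginal moderately likely under the \emph{other} hypothesis $\P^{1-\theta}_X$ \emph{and} a conditional that shifts under $\P^{1-\theta}_{Y|X}$ — reflects the subtlety that a decision error event, though rare, can still be a privacy breach; getting the quantifiers and the "except with probability $\mu$" slack exactly right, so that the achievability region and converse region coincide, is the technically demanding part. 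The MPC reduction itself is essentially a black-box invocation of known completeness results~\cite{Kus89,Bea89,MPR09} and should be routine modulo bookkeeping of the (exponentially small or zero) simulation error.
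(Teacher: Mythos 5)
Your proposal follows essentially the same route as the paper: achievability via an explicit type-threshold decision function computed with a secure protocol built from OT correlations (correctness by Sanov, privacy by KL/type-class counting), and a converse that first reduces the view-based privacy condition to a statement about what the single decision bit reveals (the paper's Lemma~\ref{lem:binfun-type}) and then derives method-of-types contradictions corresponding to conditions (i)--(v) (the paper's Lemma~\ref{lem:binfun-seq}). The only small slip is your pairing of conditions with the correct/incorrect-decision cases: in the paper (ii) and (iv) both belong to Alice's analysis ((iii) and (v) to Bob's), with (iv)/(v) arising from the interplay of correctness under one hypothesis and privacy under the other, but this does not change the approach.
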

We prove this by effectively reducing the problem to MPC of a decision function.
The trade-off above is the best possible for any decision function.
The intuition behind the conditions in the theorem are as follows: See Figure~\ref{fig} which illustrates the conditions (i), (ii)~and~(iv).
Condition (i) simply follows from the error exponent for non-private hypothesis testing where the optimal error exponent can be obtained by deciding in favor of the hypothesis $\theta$ which minimizes $\kl{\Q_{XY}}{\P^{\theta}_{XY}}$ where $\Q_{XY}$ is the type of the observation.
In condition (ii), the observed type $Q_{XY}$ is such that $\kl{\Q_{X}}{\P^{\theta}_{X}}\le \expc$ for both $\theta=0,1$, i.e., to obtain the prescribed error exponent, Alice may not make a decision simply based on her observed vector.
Now, to ensure a privacy error exponent against Alice of at least $\expp$, for all observed conditional types $\Q_{Y|X}$ such that $\klc{\Q_{Y|X}}{\P^\theta_{Y|X}}{\Q_X} \le \expp$, her decision must be $\theta$.
This leads to (ii).
Condition (iv) arises from an interplay of Alice's correctness condition for one of the hypotheses and her privacy condition for the other.
As before, the observed type $Q_{XY}$ is such that $\kl{\Q_{X}}{\P^{\theta}_{X}}\le \expc$ for both $\theta=0,1$ and Alice may not make a decision only based on her observation.
Correctness condition for the hypothesis $\theta$ requires that, if $\kl{\Q_{XY}}{\P^\theta_{XY}} \le \expc$, she must decide in favor of $\theta$, but, privacy requires that, if $\klc{\Q_{Y|X}}{\P^{1 - \theta}_{Y|X}}{\Q_X} \le \expp$, her decision must be $1-\theta$.
This gives rise to (iv). 
Conditions (iii) and (v) are analogous to (ii) and (iv), respectively, from Bob's side.
\begin{figure}[htb]
\begin{center}
\includegraphics[height=1.5in,width=3in]{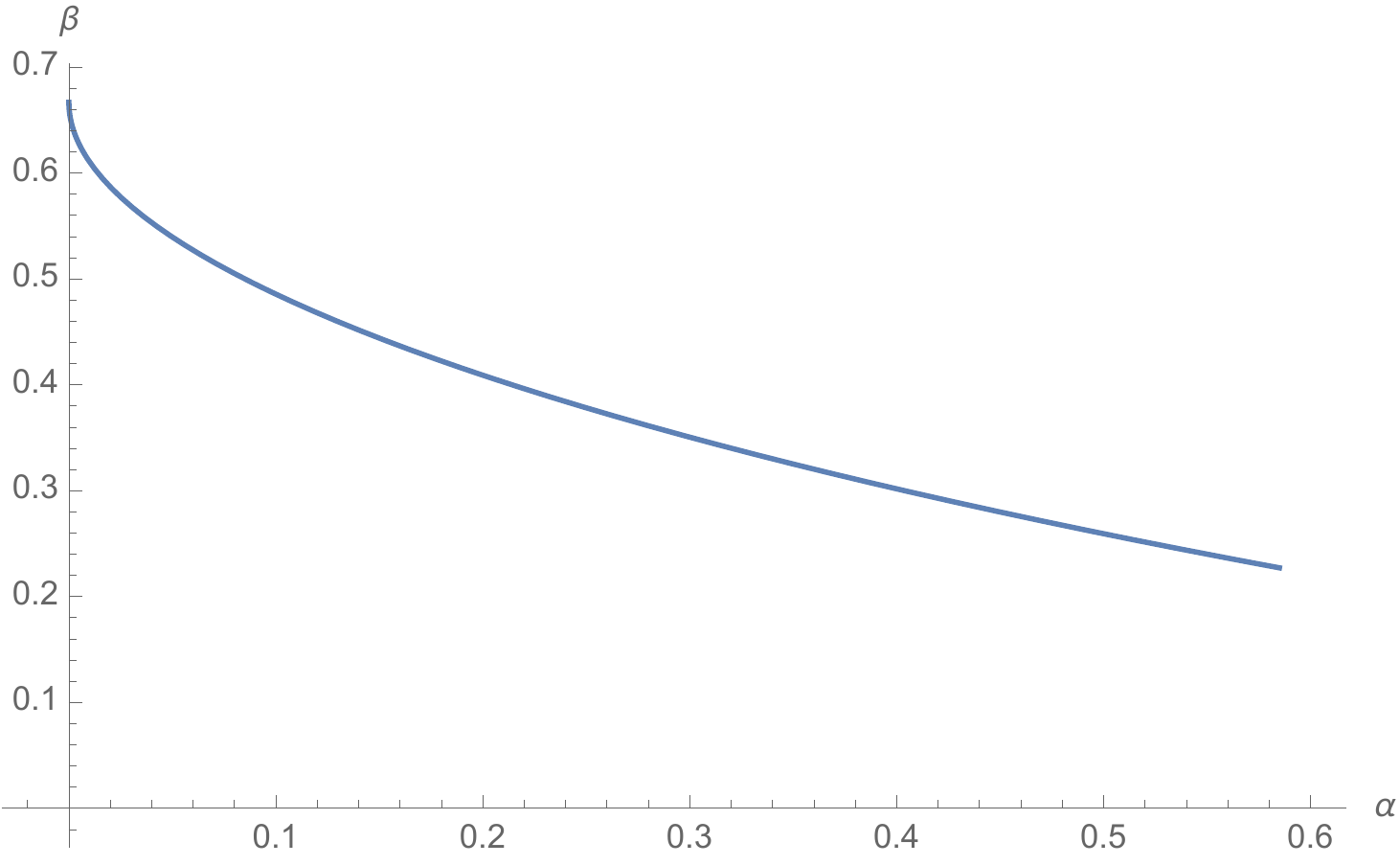}
\caption{The optimal correctness-privacy error exponent trade-off for the decision problem with null hypothesis: $\P^0(0,0) = \P^0(0,1) = \P^0(1,1) = \frac{1}{3}$ and alternate hypothesis: $\P^1(0,0) = \frac{2}{3}, \P^1(1, 1) = \frac{1}{3}$. Log is computed with base 2}
\end{center}
\end{figure}

\section{Proof Sketches}\label{sec:proofs}

In this section we provide the proof sketches of the results provided in Section~\ref{sec:results}.
The detailed proofs are provided in the Appendix.
Before we prove Theorem~\ref{thm:no-setup-impossibility}, we state the following well known result that shows the impossibility of secure computation of {\sf AND} function.
Let $\fand$ denote the {\sf AND} function, \emph{i.e.,} $\fand(x, y) = x \wedge y$ for $x, y \in \{0, 1\}$.

\begin{thm}\label{thm:and-impossible}\cite{Kilian00}
When Alice and Bob receive $u, v \in \{0, 1\}$, respectively, and have access to $W_A, W_B$, respectively, where $W_A, W_B$ are independent, it is impossible to compute $\fand(u, v)$ with $\frac{1}{6}$-security, \emph{i.e.,} there exists no protocol $\prot$ that satisfies the following properties.
\begin{align}
& \Prob_{\dec_A | U, V} (\dec_A \neq \fand(U, V) | u, v) \le \frac{1}{6},\label{eqn:mpc-corrA}\\
& \Prob_{\dec_B | U, V} (\dec_B \neq \fand(U, V) | u, v) \le \frac{1}{6},\label{eqn:mpc-corrB}\\
& \tvd{\dist{V_A | U = 0, V = 0}}{\dist{V_A | U = 0, V = 1}}\nonumber\\ & \qquad \qquad \qquad \qquad \qquad \qquad \qquad \qquad \qquad \quad \le \frac{1}{6},\label{eqn:mpc-privA}\\
& \tvd{\dist{V_B | U = 0, V = 0}}{\dist{V_B | U = 1, V = 0}}\nonumber\\ & \qquad \qquad \qquad \qquad \qquad \qquad \qquad \qquad \qquad \quad  \le \frac{1}{6}.\label{eqn:mpc-privB}\\
\end{align}
\end{thm}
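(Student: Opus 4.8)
The plan is to argue by contradiction. Suppose a protocol $\prot$, with Alice and Bob holding independent $W_A$ and $W_B$, satisfies \eqref{eqn:mpc-corrA}--\eqref{eqn:mpc-privB} with error parameter $\gamma=\tfrac16$; I keep $\gamma$ symbolic and show the four requirements clash for $\gamma$ in this range. Everything hinges on the classical \emph{combinatorial rectangle} (or \emph{splitting}) property of two-party transcripts: since every message is a deterministic function of the sender's observation, its randomness, and the messages so far, the transcript $T$ is a fixed function $\tau(u,v,W_A,W_B)$, and because $W_A\perp W_B$, an induction on the rounds shows that for each transcript $t$ the set $\{(w_A,w_B):\tau(u,v,w_A,w_B)=t\}$ is a product set $R_A(u,t)\times R_B(v,t)$, with $R_A$ not depending on $v$ and $R_B$ not depending on $u$. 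Writing $p_{uv}(t):=\Prob(T=t\mid U=u,V=v)$, $a_u(t):=\Prob(W_A\in R_A(u,t))$, $b_v(t):=\Prob(W_B\in R_B(v,t))$, this yields the factorization $p_{uv}(t)=a_u(t)\,b_v(t)$ and hence the pointwise identity $p_{00}(t)\,p_{11}(t)=p_{01}(t)\,p_{10}(t)$. The same decomposition shows that, conditioned on $(T,V)$, the law of $W_B$ does not depend on $U$ (symmetrically for $W_A$), so $\dec_B$ is, given $(T,V)$, independent of $U$.

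First I would push all four hypotheses down to statements about the transcript marginals. For privacy: $V_B=(V,W_B,T)$ and, conditioned on $(T,V)$, the law of $W_B$ is the same under $U=0$ and $U=1$, so $\tvd{\dist{V_B|U=0,V=0}}{\dist{V_B|U=1,V=0}}=\tvd{p_{00}}{p_{10}}$; thus \eqref{eqn:mpc-privB} becomes $\tvd{p_{00}}{p_{10}}\le\gamma$, and symmetrically \eqref{eqn:mpc-privA} becomes $\tvd{p_{00}}{p_{01}}\le\gamma$, so also $\tvd{p_{01}}{p_{10}}\le 2\gamma$. For correctness: put $g(t):=\Prob(\dec_B=1\mid T=t,V=1)$, which by the above does not depend on $U$. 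Then \eqref{eqn:mpc-corrB} on inputs $(1,1)$ and $(0,1)$ (where $\fand(1,1)=1$, $\fand(0,1)=0$) reads $\sum_t p_{11}(t)g(t)\ge 1-\gamma$ and $\sum_t p_{01}(t)g(t)\le\gamma$; since $g$ takes values in $[0,1]$ this forces $\tvd{p_{11}}{p_{01}}\ge 1-2\gamma$, and applying the same reasoning to $\dec_A$ on $(1,1)$ and $(1,0)$ gives $\tvd{p_{11}}{p_{10}}\ge 1-2\gamma$.

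It then remains to show that the rectangle identity is incompatible with ``$p_{00}$ close to both $p_{01}$ and $p_{10}$'' while ``$p_{11}$ is far from both''. One transparent route uses the Bhattacharyya coefficient $\mathrm{BC}(P,Q)=\sum_t\sqrt{P(t)Q(t)}$: the pointwise identity immediately gives $\mathrm{BC}(p_{00},p_{11})=\mathrm{BC}(p_{01},p_{10})$, the right side is $\ge 1-\tvd{p_{01}}{p_{10}}\ge 1-2\gamma$, while $\tvd{p_{00}}{p_{11}}\ge\tvd{p_{01}}{p_{11}}-\tvd{p_{01}}{p_{00}}\ge 1-3\gamma$ makes the left side $\le\sqrt{1-(1-3\gamma)^2}$; for $\gamma$ small these collide. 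Squeezing the threshold all the way to $\tfrac16$ requires replacing the crude Hellinger--total-variation estimates by a sharper computation: split the transcripts according to whether $g(t)$ is near $1$ or near $0$ --- $p_{11}$ concentrates on the first piece, $p_{00}\approx p_{01}\approx p_{10}$ on the second --- and bound $\sum_t\sqrt{p_{00}(t)p_{11}(t)}$ on each piece by Cauchy--Schwarz after using the identity to rewrite $p_{00}p_{11}=p_{01}p_{10}$. Since {\sf AND} is complete for two-party secure computation, an alternative is to reduce from the quantitative impossibility of oblivious transfer without a setup, which is the route underlying the cited result~\cite{Kilian00}.

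The one genuinely delicate step is the last: converting the structural incompatibility into the precise constant $\tfrac16$. The rectangle/splitting lemma and the reduction of the four conditions to transcript marginals are routine by comparison.
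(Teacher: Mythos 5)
The paper does not actually prove this statement---it is quoted as a known result from~\cite{Kilian00}---so the only question is whether your argument, on its own, establishes it. The structural half of your proposal is sound and standard: with $W_A\perp W_B$ the cut-and-paste/rectangle property gives the pointwise factorization $p_{uv}(t)=a_u(t)b_v(t)$, hence $p_{00}p_{11}=p_{01}p_{10}$; conditioned on the transcript and its own input each party's randomness is independent of the other's input, so \eqref{eqn:mpc-privA}--\eqref{eqn:mpc-privB} do reduce to $\tvd{p_{00}}{p_{01}}\le\gamma$, $\tvd{p_{00}}{p_{10}}\le\gamma$, and the correctness conditions on $(1,1)$, $(0,1)$, $(1,0)$ do give $\tvd{p_{11}}{p_{01}}\ge 1-2\gamma$ and $\tvd{p_{11}}{p_{10}}\ge 1-2\gamma$ via the $[0,1]$-valued decision rates $g,h$.

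The genuine gap is the step you yourself flag: the theorem asserts impossibility at the specific threshold $\tfrac16$, and your completed chain of inequalities does not reach it. Your Bhattacharyya argument needs $1-2\gamma>\sqrt{1-(1-3\gamma)^2}$, i.e.\ $13\gamma^2-10\gamma+1>0$, which holds only for $\gamma<(5-2\sqrt3)/13\approx 0.118$, strictly below $\tfrac16\approx0.167$; at $\gamma=\tfrac16$ the two sides are $\tfrac23$ versus $\sqrt3/2$ and there is no contradiction. The proposed repair (``split transcripts by whether $g(t)$ is near $0$ or $1$ and apply Cauchy--Schwarz'') is only a gesture: it is not shown, and it is not evident, that the relaxation you retain---the four transcript marginals, the pointwise product identity, the two TV bounds and the two one-sided decision bounds---is even infeasible at $\gamma=\tfrac16$; the normalization constraints $E[L]=E[M]=E[LM]=1$ (in likelihood-ratio form) interact delicately with the TV bounds, and pinning the constant may require using more of the protocol structure (e.g.\ a round-by-round argument) than survives in your relaxation. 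Your fallback of ``reducing from the impossibility of OT without setup'' is simply an appeal to the literature, which is exactly what the paper does by citing~\cite{Kilian00}; as a self-contained proof of the statement with the constant $\tfrac16$, the proposal is incomplete.
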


\subsection{Proof Sketch for Theorem~\ref{thm:no-setup-impossibility}}\label{sec:eg}
	Given a $(n, \delta, \mu)$-weakly private protocol $\prot$ for the hypothesis testing problem described in Example~\ref{eg}, we construct a $\tau$-secure protocol $\prot_{\wedge}$ for computing $\fand$, where $\tau = \max(\delta, 2\mu)$.
	The impossibility of $\prot$ now follows from Theorem~\ref{thm:and-impossible} which states that such a $\prot_{\wedge}$ is impossible since $\tau = \frac{1}{6}$.

	\paragraph*{Description of $\prot_{\wedge}$} When Alice and Bob receive $u, v \in \{0, 1\}$, respectively as input.
	\begin{enumerate}
		\item Alice samples $Z^n$ uniformly from $\{0, 1\}^n$ and sends it to Bob.
		\item If $u = 1$, Alice sets $X^n = Z^n$, else $X^n = \hat{X}^n$, where $\hat{X}^n$ is uniform in $\{0, 1\}^n$ and independent of $Z^n$. 
		\item If $v = 1$, Bob sets $Y^n = Z^n$, else $Y^n = \hat{Y}^n$, where $\hat{Y}^n$ is uniform in $\{0, 1\}^n$ and independent of $Z^n$. 
		\item Alice and Bob execute $\prot$ with $X^n, Y^n$ as inputs, respectively, and output whatever $\prot$ outputs.
	\end{enumerate}
	\begin{claim}\label{clm:eg-corr}
		$\prot_{\wedge}$ computes $\fand$ with $\tau$-security.
	\end{claim}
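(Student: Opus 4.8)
The plan is to verify that $\prot_\wedge$ meets the four requirements of $\tau$-security from Theorem~\ref{thm:and-impossible}---correctness \eqref{eqn:mpc-corrA}--\eqref{eqn:mpc-corrB} and privacy \eqref{eqn:mpc-privA}--\eqref{eqn:mpc-privB}---with $\tau = \max(\delta, 2\mu)$. The first step is the structural observation that in $\prot_\wedge$ the pair $(X^n, Y^n)$ fed to the sub-protocol $\prot$ is distributed exactly as $\P^{\fand(u,v)}_{XY}$ for Example~\ref{eg}: if $u = v = 1$ then $X^n = Y^n = Z^n$ is uniform, which is $\P^1_{XY}$; and if $(u,v) \neq (1,1)$ then at least one of $X^n, Y^n$ is a fresh uniform string independent of the other, so $(X^n,Y^n)$ is i.i.d.\ uniform, i.e.\ $\P^0_{XY}$. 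Moreover $\prot_\wedge$ uses only the (independent) randomness $W_A, W_B$ and private coins of $\prot$, which are independent of $(u,v,Z^n)$, so conditioned on $(u,v)$ the execution of $\prot$ inside $\prot_\wedge$ is distributed exactly as $\prot$ conditioned on $\Theta = \fand(u,v)$ in the hypothesis testing problem. Correctness of $\prot_\wedge$ is then immediate from $\delta$-correctness of $\prot$: $\Prob(\dec_A \neq \fand(u,v)\mid u,v) \le \delta \le \tau$, and likewise for $\dec_B$.

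For privacy I would prove \eqref{eqn:mpc-privA} in detail; \eqref{eqn:mpc-privB} follows symmetrically by swapping the roles of Alice and Bob (and $u$ with $v$), using that the string $Z^n$ Alice sends is part of Bob's view in $\prot_\wedge$. Write Alice's view in $\prot_\wedge$ as $V_A = (u, Z^n, V_A^{\prot})$, where $V_A^{\prot} = (X^n, W_A, \text{transcript of }\prot)$ is her view in $\prot$. Condition on $U = 0$: then $X^n$ is a fresh uniform string, and whether $V = 0$ or $V = 1$ the sub-protocol $\prot$ is run on i.i.d.\ uniform inputs, so the marginal of $V_A^{\prot}$ is the same in the two cases. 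The only difference is the conditional law of $Z^n$ given $V_A^{\prot}$: for $V = 0$ it is uniform and independent of $V_A^{\prot}$, while for $V = 1$ we have $Z^n = Y^n$, so it is $\dist{Y^n \mid V_A^{\prot}, \Theta = 0}$. Since $\tvd{P_A P_{B\mid A}}{P_A Q_{B\mid A}} = \expect{\tvd{P_{B\mid A}}{Q_{B\mid A}}}$ for joint laws with a common first marginal, the left-hand side of \eqref{eqn:mpc-privA} equals $\expect{T}$, where $T := \tvd{\mathrm{Unif}(\{0,1\}^n)}{\dist{Y^n \mid V_A^{\prot}, \Theta = 0}}$ is viewed as a random variable (a function of $V_A^{\prot}$).

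To finish I would invoke weak privacy to bound $\expect{T}$. Under $\Theta = 0$ in Example~\ref{eg} the observations are i.i.d.\ uniform, so $\dist{Y^n \mid X^n, \Theta = 0} = \mathrm{Unif}(\{0,1\}^n)$; since $X^n$ is part of $V_A^{\prot}$, the quantity $T$ is precisely the one appearing in the weak-privacy condition \eqref{eqn:weak-priv-A}, which therefore asserts $\Prob(T \ge \mu) \le \mu$. As $T$ takes values in $[0,1]$, a truncation bound gives $\expect{T} = \expect{T\, I(T < \mu)} + \expect{T\, I(T \ge \mu)} \le \mu + \mu = 2\mu$, so \eqref{eqn:mpc-privA} holds with bound $2\mu \le \tau$, which (with the symmetric argument for \eqref{eqn:mpc-privB}) proves the claim. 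I expect the correctness check to be routine; the step requiring care is this privacy argument---identifying what $V_A$ is in $\prot_\wedge$, noticing that flipping $V$ leaves the $\prot$-marginal of $V_A^{\prot}$ unchanged so that the view distance collapses to $\expect{T}$, and converting the tail bound of weak privacy into the expectation bound $\expect{T}\le 2\mu$.
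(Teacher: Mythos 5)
Your proposal is correct and follows essentially the same route as the paper: the same distributional observation reduces correctness to $\delta$-correctness of $\prot$, and your privacy argument—identifying Alice's view as $(Z^n, V_A^{\prot})$, noting the $V_A^{\prot}$-marginal is unchanged when $V$ flips, reducing the view distance to the expected conditional total variation of $Z^n$ given $V_A^{\prot}$, and converting the weak-privacy tail bound into the expectation bound $2\mu$—is exactly the content of the paper's Claim~\ref{clm:eg-priv} together with Lemma~\ref{lem:avg-privacy}. The only difference is cosmetic: you invoke the standard common-marginal identity for total variation where the paper expands the distance explicitly via Bayes' rule.
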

	\begin{proof}[Proof sketch]
	When $\wedge(u, v) = 1$, inputs of Alice and Bob in $\prot$, \emph{viz.} $X^n, Y^n$, come according to the hypothesis $X^n = Y^n$ (\emph{i.e.,} alt. hypothesis, $\Theta = 1$), and when $\wedge(u, v) = 0$, $X^n$ and $Y^n$ are independent (\emph{i.e.,} null hypothesis, $\Theta = 0$).
	That $\prot_\wedge$ satisfies condition~\eqref{eqn:mpc-corrA}~and~\eqref{eqn:mpc-corrB}, now follows from the $\delta$-correctness of $\prot$.

	Next we show that $\prot_\wedge$ satisfies the condition~\eqref{eqn:mpc-privA}; that it satisfies~\eqref{eqn:mpc-privB} can be shown similarly.
	
	When $(u, v) = (0, 0)$, Alice's view $V_A = (Z^n, V_A^{\prot}(\hat{X}^n, \hat{Y}^n))$, where $V_A^{\prot}(\hat{X}^n, \hat{Y}^n)$ is the view of Alice when $\prot$ is executed with $\hat{X}^n, \hat{Y}^n$ as inputs of Alice and Bob, respectively.
	Note that $V_A^{\prot}(\hat{X}^n, \hat{Y}^n)$ consists of $\hat{X}^n, W_A$ and the transcript of the protocol $\prot$.
	Similarly, when $(u, v) = (0, 1)$, $V_A = (Z^n, V_A^{\prot}(\hat{X}^n, Z^n))$.
	The following claim, proved in the Appendix, shows that the statistical distance between Alice's views in these two cases is at most $2 \mu$.
	\begin{claim}\label{clm:eg-priv}
		If $\prot$ is $\mu$-weakly private, when $Z^n, \hat{X}^n, \hat{Y}^n$ are independently and uniformly distributed in $\{0, 1\}^n$, 
		\begin{align*}
			\tvd{\dist{Z^n, V_A^{\prot} (\hat{X}^n, Z^n)}}{\dist{Z^n, V_A^{\prot}(\hat{X}^n, \hat{Y}^n)}} \le 2\mu.
		\end{align*}
	\end{claim}
	\end{proof}
	Claim~\ref{clm:eg-corr},~\ref{clm:eg-priv} and Theorem~\ref{thm:and-impossible} together imply the impossibility of $\prot$, proving the theorem.

\subsection{Proof Sketch for Theorem~\ref{thm:error-exponent} (Converse)}

\paragraph*{Overview} The proof proceeds in two steps.
In Lemma~\ref{lem:binfun-type}, we show that if a protocol has small correctness and privacy error, when Alice and Bob observe only the input and output of the decision function (as if the users were simply given the outputs of the decision functions by a genie), correctness and privacy error is small with respect to an average notion of privacy.
In the second step, we show the converse by providing an upper bound on correctness-privacy error exponent region of such functions (Lemma~\ref{lem:binfun-seq}) with respect to the above mentioned notion of privacy.

In the sequel, for brevity, we will often represent $\binfun_A(X^n, Y^n)$ (resp. $\binfun_B(X^n, Y^n)$) by $\binfun_A$ (resp. $\binfun_B$) whenever it does not cause confusion.

\begin{lemma}\label{lem:binfun-type}
Given a $(n, \delta, \mu)$-private binary hypothesis testing protocol $\prot$, let $\binfun_A, \binfun_B$ be randomized boolean functions such that for all $x, y$,
\begin{align}\label{eqn:protfun-def}
	\dist{\dec_A | x, y} \equiv \dist{\binfun_A(X^n, Y^n) | x, y},\\
	\dist{\dec_B | x, y} \equiv \dist{\binfun_B(X^n, Y^n) | x, y},
\end{align}
where $(\equiv)$ denotes identical distributions. Then,
\begin{enumerate}
\item[(i).] $\Prob_{\binfun_A | \Theta}(1 - \theta | \theta) \le \delta$ and $\Prob_{\binfun_B | \Theta}(1 - \theta | \theta) \le \delta$
\item[(ii).] For $\theta = 0, 1$, for all $x^n \in \cX^n$,
\begin{align}
&\sum_{i \in \{0, 1\}} \Prob_{\binfun_A | X^n, \Theta} (i | x^n, \theta) \nonumber \\
& \qquad \cdot \tvd{\dist{Y^n|x^n, \theta}}{\dist{Y^n|\binfun_A = i, x^n, \theta}} \le 2\mu.\label{eqn:binfun-type-priv-A}
\end{align}
For $\theta = 0, 1$, for all $y^n \in \cY^n$,
\begin{align}
&\sum_{i \in \{0, 1\}} \Prob_{\binfun_B | Y^n, \Theta} (i | y^n, \theta)\nonumber \\
& \qquad \cdot \tvd{\dist{X^n|y^n, \theta}}{\dist{X^n|\binfun_B = i, y^n, \theta}} \le 2\mu.\label{eqn:binfun-type-priv-B}
\end{align}
\end{enumerate}
\end{lemma}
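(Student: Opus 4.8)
The plan is to recognize that $\binfun_A$ (resp.\ $\binfun_B$) is, in its joint law with $(\Theta,X^n,Y^n)$, nothing more than a genie reporting Alice's decision $\dec_A=\decfun_A(V_A)$ (resp.\ Bob's decision $\dec_B$). Dividing \eqref{eqn:prot-dist} by $\Prob_\Theta(\theta)\,\Prob_{X^n,Y^n|\Theta}(x^n,y^n|\theta)$ shows that under $\prot$ the conditional law of $(\dec_A,\dec_B)$ given $(x^n,y^n)$ does not depend on $\theta$; in particular $\dec_A$ is conditionally independent of $\Theta$ given $(X^n,Y^n)$, and its conditional law given $(x^n,y^n)$ is exactly the one used to define $\binfun_A$ in \eqref{eqn:protfun-def}. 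Hence, in the joint distribution obtained by drawing $(\Theta,X^n,Y^n)$ as in the testing problem and then drawing $\binfun_A(X^n,Y^n)$, the tuple $(\Theta,X^n,Y^n,\binfun_A)$ has the same law as $(\Theta,X^n,Y^n,\dec_A)$ under $\prot$, and symmetrically for $B$. It therefore suffices to prove (i) and (ii) with $\dec_A,\dec_B$ in place of $\binfun_A,\binfun_B$, working inside the distribution induced by $\prot$. With this identification, (i) is precisely the $\delta$-correctness conditions \eqref{eqn:corr-A} and \eqref{eqn:corr-B}.

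For (ii) I would fix $\theta$ and $x^n$ and argue for Alice, the case of Bob being symmetric. The first step is to restate the $\mu$-privacy condition \eqref{eqn:priv-A} in a form free of $w_A$: since the view $V_A$ contains both $X^n$ and $W_A$, conditioning on $V_A=v_A$ already fixes $x^n$ and $w_A$, so $\dist{Y^n|v_A,x^n,w_A,\theta}=\dist{Y^n|v_A,\theta}$ for every $v_A$ in the support. Writing $B_{x^n,\theta}:=\set{v_A : \tvd{\dist{Y^n|x^n,\theta}}{\dist{Y^n|v_A,\theta}}\ge\mu}$, the condition \eqref{eqn:priv-A} becomes $\Prob(V_A\in B_{x^n,\theta}\mid x^n,w_A,\theta)\le\mu$ for every $w_A$; since the event $\set{V_A\in B_{x^n,\theta}}$ is partitioned according to the value of $W_A$, averaging over $w_A$ yields $\Prob(V_A\in B_{x^n,\theta}\mid x^n,\theta)\le\mu$.

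The second step exploits that conditioning on $\dec_A=i$ is a coarsening of conditioning on $V_A$. Since $\dec_A=\decfun_A(V_A)$, the distribution $\dist{Y^n|\dec_A=i,x^n,\theta}$ is the mixture of the distributions $\dist{Y^n|v_A,\theta}$ over those $v_A$ consistent with $x^n$ and satisfying $\decfun_A(v_A)=i$, with weights $\Prob(V_A=v_A\mid\dec_A=i,x^n,\theta)$. Convexity of total variation distance in its second argument then gives $\tvd{\dist{Y^n|x^n,\theta}}{\dist{Y^n|\dec_A=i,x^n,\theta}}\le\mathbb{E}\big[\tvd{\dist{Y^n|x^n,\theta}}{\dist{Y^n|V_A,\theta}}\,\big|\,\dec_A=i,x^n,\theta\big]$. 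Substituting this into the left-hand side of \eqref{eqn:binfun-type-priv-A} and applying the law of total expectation over $\dec_A$ collapses the sum over $i$ to $\mathbb{E}\big[\tvd{\dist{Y^n|x^n,\theta}}{\dist{Y^n|V_A,\theta}}\,\big|\,x^n,\theta\big]$. Splitting this expectation over $\set{V_A\in B_{x^n,\theta}}$ and its complement, bounding the integrand by $1$ on the former and by $\mu$ on the latter, and using $\Prob(V_A\in B_{x^n,\theta}\mid x^n,\theta)\le\mu$ from the first step, bounds it by $\mu+\mu=2\mu$. The bound \eqref{eqn:binfun-type-priv-B} follows by the same argument with the roles of Alice and Bob exchanged.

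The step I expect to require the most care is the first step of (ii): deriving the single bad-event bound $\Prob(V_A\in B_{x^n,\theta}\mid x^n,\theta)\le\mu$ from \eqref{eqn:priv-A}, which is only asserted for each fixed pair $(x^n,w_A)$. This hinges on the observation that the view $V_A$ determines $(X^n,W_A)$, so that the conditioning on $(x^n,w_A)$ inside \eqref{eqn:priv-A} is redundant once $V_A$ is given, and on correctly averaging the per-$w_A$ guarantee against the distribution of $W_A$. The remainder is a routine combination of convexity of total variation distance with the law of total expectation.
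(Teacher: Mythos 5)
Your proposal is correct and takes essentially the same route as the paper: identify $\binfun_A$ with $\dec_A$ via the conditional independence of the decision from $\Theta$ given $(X^n,Y^n)$, write $\dist{Y^n|\dec_A=i,x^n,\theta}$ as a mixture of $\dist{Y^n|v_A,\theta}$ over views with $\decfun_A(v_A)=i$ (using that $x^n$ and the decision are determined by $V_A$), apply convexity of total variation, and conclude with the bad-set averaging bound, which the paper isolates as a separate averaging lemma. Your explicit averaging of the per-$w_A$ privacy guarantee over $W_A$ merely makes precise a step the paper leaves implicit.
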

\begin{proof}[Proof sketch]
	Essentially, $\binfun_A, \binfun_B$ are the decision functions computed by Alice and Bob, respectively, in the protocol $\prot$.
	The statement about correctness (i) directly follows from this observation.
	For $\theta = 0, 1$ and for all $x^n \in \cX^n$,
	The privacy statement (ii) for $\binfun_A$ can be shown using a simple averaging argument on $\mu$-privacy of $\prot$.
	The privacy statement for $\binfun_B$ can be shown similarly.
\end{proof}

We now proceed to the second step of the proof.
If correctness-privacy error exponent $(\expc, \expp)$ is achievable, then there exists a sequence of protocols $\left(\prot_{n_i}\right)_{i \in \mathbb{N}}$ such that, $\prot_{n_i}$ is a $(n_i, \delta_i, \mu_i)$-private binary hypothesis testing protocol such that,
\begin{align*}
	\lim \limits_{i \rightarrow \infty} -\frac{1}{n_i} \log{\delta_{n_i}} \ge \expc,
	\lim \limits_{i \rightarrow \infty} -\frac{1}{n_i} \log{\mu_{n_i}} \ge \expp.
\end{align*}

Appealing to Lemma~\ref{lem:binfun-type}, for each $n_i$, we construct boolean randomized functions $(\binfun_A^{n_i}, \binfun_B^{n_i})$ from $\prot_{n_i}$.
Since $\prot_{n_i}$ is a $(n_i, \delta_i, \mu_i)$-private binary hypothesis testing protocol, $(\binfun_A^{n_i}, \binfun_B^{n_i})$ satisfy conditions (i), (ii) in Lemma~\ref{lem:binfun-type} w.r.t. the parameters $\delta_i$ and $\mu_i$.
In the next lemma, we show the necessity of conditions (i), (ii)~and~(iv) in Theorem~\ref{thm:error-exponent} using the sequence of functions, $\left(\binfun_A^{n_i}\right)_{i \in \mathbb{N}}$. 
The necessity of conditions (iii)~and~(v) can be shown similarly by analyzing $\left(\binfun_B^{n_i}\right)_{i \in \mathbb{N}}$. 
Thus, it remains to prove the following lemma.
\begin{lemma}\label{lem:binfun-seq}
Let $(\binfun_A^{n_i})_{i \in \mathbb{N}}$ be a sequence of randomized boolean functions that satisfy the following properties for each $n_i$.
\begin{enumerate}
\item[(i).] $\Prob_{\binfun^{n_i}_A(X^{n_i}, Y^{n_i}) | \Theta}(1 - \theta | \theta) \le \delta_{n_i}$ for $\theta = 0, 1$,
\item[(ii).]For $\theta = 0, 1$, for all $x^{n_i} \in \cX^{n_i}$,
\begin{multline*}
\sum_{i = 0, 1} \Prob_{\binfun_A^{n_i} | X^{n_i}, \Theta}(i | x^{n_i}, \theta) \cdot\\
\tvd{\dist{Y^{n_i}|x^{n_i}, \theta}}{\dist{Y^{n_i}|\binfun_A^{n_i}, x^{n_i}, \theta}}\\ \le 2\mu_{n_i}.
\end{multline*}
\end{enumerate}
If $\expc, \expp$ are such that,
\begin{align*}
	\expc \le \lim \limits_{i \rightarrow \infty} -\frac{1}{n_i} \log{\delta_{n_i}}, \;
	\expp \le \lim \limits_{i \rightarrow \infty} -\frac{1}{n_i} \log{\mu_{n_i}},
\end{align*}
then $(\expc, \expp)$ must satisfy conditions (i), (ii)~and~(iv) in Theorem~\ref{thm:error-exponent}.
\end{lemma}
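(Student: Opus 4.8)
The plan is to argue by contradiction: suppose $(\binfun^{n_i}_A)_i$ satisfies hypotheses (i) and (ii) of the lemma with exponents as in the statement, but that some distribution $\Q_{XY}$ witnesses the failure of one of conditions (i), (ii), (iv) of Theorem~\ref{thm:error-exponent} (i.e.\ satisfies the corresponding displayed inequalities there); I would then show that $\delta_{n_i}$, resp.\ $\mu_{n_i}$, cannot decay faster than $2^{-n_i\expc}$, resp.\ $2^{-n_i\expp}$, up to sub-exponential factors, contradicting the assumed exponents. Throughout I would suppress the index $i$ (writing $n$ for $n_i$), fix a sequence of joint types $\Q^{(n)}_{XY}\to\Q_{XY}$ with induced marginal type $\Q^{(n)}_X$ and conditional type $\Q^{(n)}_{Y|X}$, and abbreviate $f_{x^n}(y^n)\defineqq\Prob(\binfun^n_A(x^n,y^n)=1)$ (the soft decision) and $\bar f_\theta(x^n)\defineqq\Prob(\binfun^n_A=1|x^n,\theta)=\mathbb{E}_{Y^n\sim\P^\theta_{Y^n|X^n=x^n}}[f_{x^n}(Y^n)]$ (its conditional mean).

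The argument would rest on two ingredients. The first is a reformulation of the privacy hypothesis (hypothesis (ii) of the lemma): a direct computation gives, for every $x^n$ and $\theta$,
\begin{multline*}
\sum_{i\in\{0,1\}}\Prob(\binfun^n_A=i|x^n,\theta)\,\tvd{\dist{Y^n|x^n,\theta}}{\dist{Y^n|\binfun^n_A=i,x^n,\theta}}\\
=\mathbb{E}_{Y^n\sim\P^\theta_{Y^n|X^n=x^n}}\bigl[\,\bigl|\,f_{x^n}(Y^n)-\bar f_\theta(x^n)\,\bigr|\,\bigr],
\end{multline*}
so hypothesis (ii) says exactly that, conditioned on $X^n=x^n$ under either hypothesis $\theta$, the soft decision $f_{x^n}(Y^n)$ is within $2\mu_n$ of its mean $\bar f_\theta(x^n)$ in $L^1$. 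The second is two method-of-types facts: for both $\theta$, the law $\P^\theta_{X^nY^n}$ is uniform on every joint type class and $\P^\theta_{Y^n|X^n=x^n}$ is uniform on every conditional type class $\typeclass{\cY}{n}{\Q^{(n)}_{Y|X}|x^n}$; and the joint type class $\typeclass{\cX\times\cY}{n}{\Q^{(n)}_{XY}}$ has probability at least $(n+1)^{-|\cX||\cY|}2^{-n\kl{\Q^{(n)}_{XY}}{\P^\theta_{XY}}}$ under hypothesis $\theta$, while $\typeclass{\cY}{n}{\Q^{(n)}_{Y|X}|x^n}$ has conditional probability at least $(n+1)^{-|\cX||\cY|}2^{-n\klc{\Q^{(n)}_{Y|X}}{\P^\theta_{Y|X}}{\Q^{(n)}_X}}$ given $X^n=x^n$ under hypothesis $\theta$.

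For condition (i) (some $\Q_{XY}$ with $\kl{\Q_{XY}}{\P^\theta_{XY}}\le\expc$ for both $\theta$) I would let $r$ be the average of $f_{x^n}$ over $T\defineqq\typeclass{\cX\times\cY}{n}{\Q^{(n)}_{XY}}$; uniformity gives $\delta_n\ge\Prob(\binfun^n_A=1|\Theta=0)\ge r\,\Prob(T|\Theta=0)$ and $\delta_n\ge\Prob(\binfun^n_A=0|\Theta=1)\ge(1-r)\,\Prob(T|\Theta=1)$, and adding these and using the probability bound (with $\kl{\Q^{(n)}_{XY}}{\P^\theta_{XY}}\to\kl{\Q_{XY}}{\P^\theta_{XY}}\le\expc$) yields $\delta_n\gtrsim 2^{-n\expc}$, the contradiction. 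For condition (ii) (some $\Q_{XY}$ with $\kl{\Q_X}{\P^\theta_X}\le\expc$ and $\klc{\Q_{Y|X}}{\P^\theta_{Y|X}}{\Q_X}\le\expp$ for both $\theta$) a Markov argument on hypothesis (i) over the marginal type class $A\defineqq\typeclass{\cX}{n}{\Q^{(n)}_X}$ (on which $\P^\theta_{X^n}$ is uniform with mass $\gtrsim 2^{-n\expc}$) shows that the averages over $A$ of $\bar f_0$ and of $1-\bar f_1$ both tend to $0$, so for all large $n$ there is $x^n\in A$ with $\bar f_0(x^n)\to 0$ and $\bar f_1(x^n)\to 1$; letting $r(x^n)$ be the average of $f_{x^n}$ over $B(x^n)\defineqq\typeclass{\cY}{n}{\Q^{(n)}_{Y|X}|x^n}$ and restricting the $L^1$ bound to $B(x^n)$ (uniform under $\P^\theta_{Y^n|X^n=x^n}$ with mass $\gtrsim 2^{-n\expp}$ for both $\theta$) gives $|r(x^n)-\bar f_\theta(x^n)|\lesssim\mu_n\,2^{n\expp}$ for $\theta=0,1$, hence $|\bar f_0(x^n)-\bar f_1(x^n)|\lesssim\mu_n\,2^{n\expp}$, and since the left side tends to $1$ this forces $\mu_n\gtrsim 2^{-n\expp}$, the contradiction. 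For condition (iv), in its $\theta$-instance $\kl{\Q_{XY}}{\P^\theta_{XY}}\le\expc$, $\kl{\Q_X}{\P^{1-\theta}_X}\le\expc$, $\klc{\Q_{Y|X}}{\P^{1-\theta}_{Y|X}}{\Q_X}\le\expp$, I would use the same template: correctness under $\Theta=\theta$ on $T$ forces the average of $f$ over $T$, hence $r(x^n)$ for typical $x^n\in A$, to tend to $\theta$, while correctness under $\Theta=1-\theta$ forces $\bar f_{1-\theta}(x^n)\to 1-\theta$ for typical $x^n\in A$ and privacy under $\Theta=1-\theta$ restricted to $B(x^n)$ transfers this to $r(x^n)\to 1-\theta$; since $\theta\ne 1-\theta$ this is a contradiction --- exactly the interplay of Alice's correctness for one hypothesis with her privacy for the other noted after Theorem~\ref{thm:error-exponent}. (Conditions (iii) and (v) follow symmetrically from $(\binfun^{n_i}_B)_i$.)

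The step I expect to be the main obstacle is the uniform bookkeeping of the slack: one must choose $\Q^{(n)}_{XY}\to\Q_{XY}$ so that the $(n+1)^{-|\cX||\cY|}$ prefactors and the type-approximation error $2^{n\cdot o(1)}$ stay negligible against the exponential scales $2^{-n\expc}$ and $2^{-n\expp}$, and so that the $\sqrt{\cdot}$ losses from the Markov steps still vanish. Conceptually, the crux is the simultaneous use --- for \emph{the same} $x^n$ and \emph{the same} conditional type class $B(x^n)$ --- of a correctness constraint under one hypothesis and the privacy ($L^1$ concentration) constraint under the other, which is precisely what gives rise to conditions (ii) and (iv).
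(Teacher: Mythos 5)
Your proposal is correct in substance and rests on the same mechanism as the paper's proof --- a method-of-types contradiction obtained by imposing, at a common $x^n$ and over a common conditional type class, a correctness constraint under one hypothesis and the privacy constraint under the other --- but your decomposition is genuinely different. The paper proves two claims: Claim~\ref{clm:binfun-seq-corr} (for any joint type with $\kl{\Q_{XY}}{\P^\theta_{XY}}<\expc$, for $99/100$ of the $x^n$ in the marginal type class the slice-averaged decision is $\theta$) and Claim~\ref{clm:binfun-seq-priv}, which is proved in two pointwise steps for $x^n$ in an auxiliary set $R$ obtained by applying Claim~\ref{clm:binfun-seq-corr} to the product type $\Q_X\cdot\P^\theta_{Y|X}$: privacy first forces $\Prob(\binfun_A^n=\theta\,|\,x^n,\theta)\ge 95/100$, and privacy again transfers this to the $\Q_{XY}$-slice, both via Bayes manipulations of conditional distributions and the conditional type-class probability bound (Lemma~\ref{lem:kld-prob}). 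You instead (a) rewrite hypothesis (ii) exactly as $\mathbb{E}\bigl[\,|f_{x^n}(Y^n)-\bar f_\theta(x^n)|\,\bigr]\le 2\mu_n$ --- this identity is correct --- and, by restricting to the conditional type class and applying Jensen, turn it into $|r(x^n)-\bar f_\theta(x^n)|\lesssim \mu_n\,2^{n\expp}\,\mathrm{poly}(n)$; (b) obtain the pointwise behaviour of $\bar f_\theta$ at typical $x^n$ from correctness plus a Markov argument over the marginal type class (using $\kl{\Q_X}{\P^\theta_X}\le\expc$, resp.\ $\kl{\Q_X}{\P^{1-\theta}_X}\le\expc$), so you never need the auxiliary type $\Q_X\cdot\P^\theta_{Y|X}$ or the paper's $95/100$ step; and (c) settle condition (i) by simply adding the two correctness bounds on the joint type class rather than intersecting majority sets. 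Your route is cleaner and more symmetric (condition (ii) reduces to the triangle inequality $|\bar f_0-\bar f_1|\le|r-\bar f_0|+|r-\bar f_1|$), while the paper's route is more pedestrian but makes the quantitative slack explicit through its $\tau>\tau'$ bookkeeping. One point you should make explicit when writing it up (the paper handles it only implicitly, by stating its claims with strict inequalities and invoking rational approximation at the end): ``forces $\mu_n\gtrsim 2^{-n\expp}$'' is not by itself a contradiction with $\lim -\tfrac{1}{n}\log\mu_n\ge\expp$; you need the strict gap between the divergences of the chosen type-approximating $\Q^{(n)}_{XY}$ and the levels $\expc,\expp$, so that the polynomial prefactors and a residual factor of the form $2^{-n(\tau-\tau')}$ drive the relevant averages to zero --- exactly the bookkeeping you flag as the main obstacle, and it closes in the same way as in the paper.
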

\begin{proof}[Proof sketch]
	The following two claims are proved in the Appendix.
	\begin{claim}\label{clm:binfun-seq-corr}
		For $\theta = 0, 1$, for large enough $n$ in the sequence $(n_i)_{i \in \mathbb{N}}$, if $\Q_{XY} \in \types{\cX \times \cY}{n}$ and $\kl{\Q_{XY}}{\P^\theta_{XY}} < \expc$, then $\Prob(X^n \in S | X^n \in \Q_X, \Theta = \theta) \ge \frac{99}{100}$, where, 
		\begin{multline}
			S = \{x^n \in \typeclass{\cX}{n}{\Q_X} : \\ \frac{\sum_{y^n : (x^n, y^n) \in \typeclass{\cX \times \cY}{n}{\Q_{XY}}} \Prob(\binfun_A = \theta | x^n, y^n)}{|\{y^n : (x^n, y^n) \in \typeclass{\cX \times \cY}{n}{\Q_{XY}}\}|} \ge \frac{99}{100}\}.\label{eqn:binfun-seq-corr-set}
		\end{multline}
	\end{claim}
	\begin{claim}\label{clm:binfun-seq-priv}
		For $\theta = 0, 1$, for large enough $n$ in the sequence $(n_i)_{i \in \mathbb{N}}$, if $\Q_{XY} \in \types{\cX \times \cY}{n}, \kl{\Q_X}{\P^\theta_X} < \expc$, and $\klc{\Q_{Y|X}}{\P^\theta_{Y|X}}{\Q(x)} < \expp$, then $\Prob(X^n \in S | X^n \in \Q_X, \Theta = \theta) \ge \frac{99}{100}$, where, 
		\begin{multline}
			S = \{x^n \in \typeclass{\cX}{n}{\Q_X} : \\ \frac{\sum_{y^n : (x^n, y^n) \in \typeclass{\cX \times \cY}{n}{\Q_{XY}}} \Prob(\binfun_A = \theta | x^n, y^n)}{|\{y^n : (x^n, y^n) \in \typeclass{\cX \times \cY}{n}{\Q_{XY}}\}|} \ge \frac{80}{100}\}.\label{eqn:binfun-seq-priv-set}
		\end{multline}
	\end{claim}
	If there exists $\Q_{XY}$ that satisfies the inequalities in~\eqref{eqn:char-0}, then by Claim~\ref{clm:binfun-seq-corr}, for large enough $n$, Condition~\eqref{eqn:binfun-seq-corr-set} would be satisfied for $\theta = 0$ and $1$ for some $x^n \in \Q_X$; a contradiction.
	This proves the necessity of condition (i) in the theorem.

	If there exists $\Q_{XY}$ that satisfies the inequalities in~\eqref{eqn:char-2-A}, by Claim~\ref{clm:binfun-seq-priv}, for large enough $n$, Condition~\eqref{eqn:binfun-seq-priv-set} would be satisfied for $\theta = 0$ and $\theta = 1$ for some $x^n \in \Q_X$; a contradiction.
	This proves the necessity of Condition (ii).

	If there exists $\Q_{XY}$ that satisfies the inequalities in~\eqref{eqn:char-1-A}, then for large enough $n$, by Claim~\ref{clm:binfun-seq-corr}~and~\eqref{eqn:binfun-seq-corr-set} would be satisfied for $\theta$ and $1 - \theta$, respectively, by Claim~\ref{clm:binfun-seq-priv} and Condition~\eqref{eqn:binfun-seq-priv-set}, respectively; again a contradiction.
	This proves the necessity of Condition (iv).

	Note that the claims work only for distributions $\Q_{XY}$ with rational \emph{p.d.f}.
	But for $\Q_{XY}$ with irrational \emph{p.d.f}, we may appeal to continuity of KL divergence to get a distribution $\Q'_{XY}$ with rational \emph{p.d.f} that is arbitrarily close to $\Q_{XY}$.
	This proves the lemma, and hence the theorem.
\end{proof}

\subsection{Proof Sketch for Theorem~\ref{thm:error-exponent} (Achievability)}

\paragraph*{Overview} To show achievability, we first construct a sequence of decision functions $\left(\binfun_A^n, \binfun_B^n\right)_{n \in \mathbb{N}}$ for Alice and Bob, respectively, with the following property. If Alice and Bob observe only the input and output of their corresponding decision functions, then the correctness-privacy error exponent achieved by this sequence of decision functions match the region described by the theorem. We would then compute these functions using perfectly secure protocols. The view of such protocol reveals no more information than the input and output of the computed function.

\paragraph*{Description of $\binfun_A^n$ and $\binfun_B^n$} 
Fix $\Q_{XY} \in \types{\cX \times \cY}{n}$.
For all $(x^n, y^n) \in \typeclass{\cX \times \cY}{n}{\Q_{XY}}$ and $\theta \in \{0, 1\}$, $\binfun_A^n(x^n, y^n) = \theta$ if one of the following condition is satisfied and $\binfun_A^n(x^n, y^n) = 0$ otherwise.
\newpage
\begin{align}
	\kl{\Q_{XY}}{\P^\theta_{XY}} \le \expc,\\
	(\kl{\Q_X}{\P^\theta_X} \le \expc) \wedge (\kl{\Q_X}{\P^{1 - \theta}_X} > \expc),\\
	(\kl{\Q_X}{\P^\theta_X} \le \expc) \wedge (\klc{\Q_{Y|X}}{\P^\theta_{Y|X}}{\Q_X} \le \expp).
\end{align}
Similarly, for $\theta \in \{0, 1\}$, $\binfun_B^n(x^n, y^n) = \theta$ if one of the following condition is satisfied and $\binfun_B^n(x^n, y^n) = 0$ otherwise.
\begin{align}
	\kl{\Q_{XY}}{\P^\theta_{XY}} \le \expc,\\
	(\kl{\Q_Y}{\P^\theta_Y} \le \expc) \wedge (\kl{\Q_Y}{\P^{1 - \theta}_Y} > \expc),\\
	(\kl{\Q_Y}{\P^\theta_Y} \le \expc) \wedge (\klc{\Q_{X|Y}}{\P^\theta_{X|Y}}{\Q_Y} \le \expp).
\end{align}
Note that when $(\expc, \expp)$ satisfy conditions (i)-(v) in Theorem~\ref{thm:error-exponent}, the above functions map each $(x^n, y^n)$ uniquely to either $0$ or $1$, and are hence well defined.

\begin{claim}\label{clm:ach}
	For the sequence of decision functions $\left(\binfun_A^n, \binfun_B^n\right)_{n \in \mathbb{N}}$, $\exists (\delta_n, \mu_n)_{n \in \bbN}$ such that for all $n \in \mathbb{N}$ and all $\theta \in \{0,1\}$,
	\begin{align*}
		\Prob_{\binfun_A^n | \Theta} (1 - \theta | \theta) \le \delta_n,\;
		\Prob_{\binfun_B^n | \Theta} (1 - \theta | \theta) \le \delta_n,\\
		\Prob\left(\tvd{\dist{Y^n | x^n, \theta}}{\dist{Y^n | x^n, \binfun_A^n, \theta}} \ge \mu_n \right) \le \mu_n, \forall x^n,\\
		\Prob\left(\tvd{\dist{X^n | y^n, \theta}}{\dist{X^n | y^n, \binfun_B^n, \theta}} \ge \mu_n \right) \le \mu_n, \forall y^n,
	\end{align*}
	and
	\begin{align}\label{eqn:ach-condns}
		\expc = \lim \limits_{n \rightarrow \infty} -\frac{1}{n} \log{\delta_n},
		\expp = \lim \limits_{n \rightarrow \infty} -\frac{1}{n} \log{\mu_n}.
	\end{align}
\end{claim}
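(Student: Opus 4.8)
The plan is to bound the \emph{actual} correctness and privacy errors of the type-based decision functions $\binfun_A^n,\binfun_B^n$ by the method of types, and then take $\delta_n,\mu_n$ to be the resulting explicit exponential bounds, whose rates are exactly $\expc$ and $\expp$. The crucial point is that $\binfun_A^n(x^n,y^n)$ and $\binfun_B^n(x^n,y^n)$ depend on $(x^n,y^n)$ only through its joint type $\Q_{XY}\in\types{\cX\times\cY}{n}$, which is a rational distribution, so the hypotheses of Theorem~\ref{thm:error-exponent} apply to it directly: the ``bad'' types that would spoil any estimate below are exactly the distributions excluded by conditions (i)--(v). By the $X\leftrightarrow Y$ symmetry of the construction it suffices to treat $\binfun_A^n$ (invoking (i), (ii), (iv)); $\binfun_B^n$ is identical, using (i), (iii), (v).

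\emph{Correctness.} Fix $\theta$. Under $\Theta=\theta$ a standard type-counting bound gives $\Prob\big(\kl{\type{X^nY^n}}{\P^\theta_{XY}}\ge\expc\mid\Theta=\theta\big)\le(n+1)^{|\cX||\cY|}2^{-n\expc}$, so it is enough to show that $\binfun_A^n(x^n,y^n)=\theta$ whenever the joint type $\Q_{XY}$ of $(x^n,y^n)$ satisfies $\kl{\Q_{XY}}{\P^\theta_{XY}}<\expc$. For $\theta=1$ this is immediate, since then the first defining clause of $\binfun_A^n$ fires. For $\theta=0$ I would rule out each of the three clauses that would output $1$: the clause $\kl{\Q_{XY}}{\P^1_{XY}}\le\expc$, together with $\kl{\Q_{XY}}{\P^0_{XY}}<\expc$, would realise a distribution excluded by (i); the clause $\kl{\Q_X}{\P^0_X}>\expc$ is impossible, as $\kl{\Q_X}{\P^0_X}\le\kl{\Q_{XY}}{\P^0_{XY}}<\expc$; and the clause $\kl{\Q_X}{\P^1_X}\le\expc\wedge\klc{\Q_{Y|X}}{\P^1_{Y|X}}{\Q_X}\le\expp$, together with $\kl{\Q_{XY}}{\P^0_{XY}}<\expc$, would realise a distribution excluded by (iv). Hence the error of $\binfun_A^n$ under either hypothesis is at most $(n+1)^{|\cX||\cY|}2^{-n\expc}$; I take $\delta_n:=(n+1)^{|\cX||\cY|}2^{-n\expc}$, for which $-\tfrac1n\log\delta_n\to\expc$.

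\emph{Privacy.} Fix $x^n$, with induced $X$-type $\Q_X$, and $\theta$; write $P_i=\Prob(\binfun_A^n=i\mid x^n,\theta)$ and $Q_i=\dist{Y^n\mid\binfun_A^n=i,x^n,\theta}$. Since $\dist{Y^n\mid x^n,\theta}=P_0Q_0+P_1Q_1$, we have $\tvd{\dist{Y^n\mid x^n,\theta}}{Q_i}=P_{1-i}\tvd{Q_0}{Q_1}\le P_{1-i}$, hence $\Prob\big(\tvd{\dist{Y^n\mid x^n,\theta}}{\dist{Y^n\mid x^n,\binfun_A^n,\theta}}\ge t\big)\le\min(P_0,P_1)$ for any $t>\min(P_0,P_1)$. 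It therefore suffices to show $\min(P_0,P_1)\le(n+1)^{|\cX||\cY|}2^{-n\expp}$ uniformly in $x^n,\theta$. Given $x^n$, $\binfun_A^n(x^n,\cdot)$ depends on $y^n$ only through the conditional type $\Q_{Y|X}$ of $y^n$ given $x^n$. If $\kl{\Q_X}{\P^1_X}>\expc$, no value-$1$ clause can fire, so $\binfun_A^n(x^n,\cdot)\equiv0$; if $\kl{\Q_X}{\P^1_X}\le\expc$ but $\kl{\Q_X}{\P^0_X}>\expc$, the clause $\kl{\Q_X}{\P^1_X}\le\expc\wedge\kl{\Q_X}{\P^0_X}>\expc$ fires, so $\binfun_A^n(x^n,\cdot)\equiv1$; in both cases $\min(P_0,P_1)=0$. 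In the remaining case, $\kl{\Q_X}{\P^\theta_X}\le\expc$ for both $\theta$, the value-$1$ set equals $\{y^n:\klc{\Q_{Y|X}}{\P^1_{Y|X}}{\Q_X}\le t_{x^n}\}$ with $t_{x^n}=\max(\expp,\expc-\kl{\Q_X}{\P^1_X})$, and for $\theta=0$ a conditional-type bound gives $P_1\le(n+1)^{|\cX||\cY|}2^{-n s_{x^n}}$ with $s_{x^n}=\min\{\klc{\Q_{Y|X}}{\P^0_{Y|X}}{\Q_X}:\klc{\Q_{Y|X}}{\P^1_{Y|X}}{\Q_X}\le t_{x^n}\}$; any witness of $s_{x^n}<\expp$ would, on multiplying by $\Q_X$, realise a distribution excluded by (ii) when $t_{x^n}=\expp$ and by (iv) when $t_{x^n}=\expc-\kl{\Q_X}{\P^1_X}$, so $s_{x^n}\ge\expp$; for $\theta=1$, the value-$0$ set is contained in $\{y^n:\klc{\Q_{Y|X}}{\P^1_{Y|X}}{\Q_X}>\expp\}$, whence $P_0\le(n+1)^{|\cX||\cY|}2^{-n\expp}$. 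Thus $\min(P_0,P_1)\le(n+1)^{|\cX||\cY|}2^{-n\expp}$ always, and $\mu_n:=2(n+1)^{|\cX||\cY|}2^{-n\expp}$ makes the displayed privacy inequalities hold, with $-\tfrac1n\log\mu_n\to\expp$. The same analysis applied to $\binfun_B^n$ completes the proof of the claim.

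The main obstacle is the privacy step: one must notice that $\binfun_A^n(x^n,\cdot)$ is a function of the conditional type alone, run the case split on $\Q_X$, match the rate $s_{x^n}$ against the precise ``$\theta$ versus $1-\theta$'' shape of conditions (ii) and (iv), and convert ``the minority decision value is exponentially rare'' into the $\Prob(\tvd{\cdot}{\cdot}\ge\mu_n)\le\mu_n$ form of the privacy requirement via the mixture identity $\tvd{P_0Q_0+P_1Q_1}{Q_i}=P_{1-i}\tvd{Q_0}{Q_1}$. By contrast, the correctness step is routine Sanov-type estimation.
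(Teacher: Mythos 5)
Your proposal is correct and follows essentially the same route as the paper's proof: correctness via a type-counting (Sanov-type) bound giving $\delta_n=\mathrm{poly}(n)2^{-n\expc}$, and privacy by noting that, given $x^n$, the decision depends only on the conditional type, splitting on $\type{x^n}$ (constant decision gives zero total variation; otherwise the minority decision value has probability at most $\mathrm{poly}(n)2^{-n\expp}$), yielding the same $\mu_n=2(n+1)^{|\cX\times\cY|}2^{-n\expp}$. The only cosmetic differences are your mixture identity $\tvd{P_0Q_0+P_1Q_1}{Q_i}=P_{1-i}\tvd{Q_0}{Q_1}$ and your explicit re-derivation from conditions (ii) and (iv) of the containment of the value-$1$ set in $\{\klc{\Q_{Y|X}}{\P^0_{Y|X}}{\Q_X}\ge\expp\}$, which the paper obtains directly from the well-definedness of $\binfun_A^n$ under conditions (i)--(v).
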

We now appeal to the following theorem, proved in the Appendix, to obtain a sequence of protocols $(\prot_n)_{n \in \mathbb{N}}$ from $\left(\binfun_A^n, \binfun_B^n\right)_{n \in \mathbb{N}}$ such that $\prot_n$ is $(n, \delta_n, \mu_n)$-private.
This proves the achievability.

\begin{defn}\label{def:ot}
	An oblivious transfer (OT) correlation consists of random variables $W_A, W_B$, where $W_A = (R_0, R_1)$ and $W_B = (B, R_B)$, where $R_0, R_1, B$ \emph{i.i.d.} $\sim \text{Bernoulli}(\frac{1}{2})$.
\end{defn}

\begin{thm}\label{mpc-ot}
	Let $\binfun_A, \binfun_B : \cX \times \cY \rightarrow \{0, 1\}$ be a pair of randomized boolean functions. For sufficiently large $k$, when $W_A, W_B$ consists of $k$ copies of OT correlations, there exists a protocol $\prot$, with the following guarantees.
	\begin{align*}
		\dist{\dec_A | X = x, Y = y} \equiv \dist{\binfun_A(x, y)}, \forall x, y,\\
		\dist{\dec_B | X = x, Y = y} \equiv \dist{\binfun_B(x, y)}, \forall x, y.
	\end{align*}
	For $\theta = 0, 1$, when $\mu \ge 0$,
	\begin{multline*}
		\Prob(\tvd{\dist{ Y | x, \theta}}{\dist{Y | \binfun_A, x, \theta}} \ge \mu) \le \mu\\ \implies \Prob(\tvd{\dist{ Y | x, \theta}}{\dist{Y | V_A, x, w_A, \theta}} \ge \mu) \le \mu, \\ \forall x, w_A,
	\end{multline*}
	\begin{multline*}
		\Prob(\tvd{\dist{ X | y, \theta}}{\dist{X | \binfun_B, y, \theta}} \ge \mu) \le \mu\\ \implies \Prob(\tvd{\dist{ X | y, \theta}}{\dist{X | V_B, y, w_B, \theta}} \ge \mu) \le \mu, \\ \forall y, w_B.
	\end{multline*}
\end{thm}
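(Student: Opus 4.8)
The plan is to invoke the completeness of oblivious transfer for secure two-party computation and then translate the simulation-based security guarantee into the total-variation privacy statement required here. First I would recall that, since $W_A, W_B$ consist of $k$ copies of OT correlations, standard results on unconditionally secure MPC in the honest-but-curious model (e.g.\ the GMW-style construction over an arithmetic circuit for the randomized functionality, using OT as the only stochastic resource) give, for every pair of randomized boolean functions $(\binfun_A, \binfun_B)$ and every sufficiently large $k$, a protocol $\prot$ that \emph{perfectly} realizes the functionality that takes $(x,y)$, samples $(\binfun_A(x,y), \binfun_B(x,y))$ jointly from the prescribed distribution, and hands the first coordinate to Alice and the second to Bob. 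Perfect realization here means two things: (a) correctness, i.e.\ $\dist{\dec_A \mid X=x, Y=y} \equiv \dist{\binfun_A(x,y)}$ and similarly for Bob --- this is exactly the first display in the theorem statement; and (b) the existence of a perfect simulator for each party, i.e.\ a randomized map $\mathsf{Sim}_A$ such that for all $x,y$ and all randomness of the setup, the joint distribution $\dist{V_A, w_A \mid X=x, Y=y}$ equals $\dist{\mathsf{Sim}_A(x, \binfun_A(x,y), w_A), w_A}$. Since the function is boolean, $\binfun_A(x,y) \in \{0,1\}$, so the simulator only needs the one-bit output.

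The key step is then the following reduction, carried out for Alice (Bob is symmetric). Fix $\theta$, $x$, $w_A$ and condition throughout on $\Theta=\theta$, $X=x$, $W_A=w_A$. Because $V_A$ is produced by the simulator from $(x, \binfun_A, w_A)$ and fresh independent coins, $V_A - (\binfun_A, x, w_A, \theta) - Y$ forms a Markov chain; consequently, for any fixed value of $\binfun_A = i$, the posterior $\dist{Y \mid V_A, x, w_A, \theta}$ is a (randomized, simulator-induced) post-processing of $\dist{Y \mid \binfun_A = i, x, w_A, \theta}$, and since $W_A$ is independent of $(\Theta, X, Y)$, $\dist{Y \mid \binfun_A = i, x, w_A, \theta} = \dist{Y \mid \binfun_A = i, x, \theta}$. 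Data processing for total variation then yields, for every realization of $V_A$ with $\binfun_A = i$,
\begin{align*}
	\tvd{\dist{Y \mid x, \theta}}{\dist{Y \mid V_A, x, w_A, \theta}} \le \tvd{\dist{Y \mid x, \theta}}{\dist{Y \mid \binfun_A = i, x, \theta}}
\end{align*}
whenever the right-hand side is at least the left --- more carefully, one shows $\Prob(\tvd{\dist{Y \mid x,\theta}}{\dist{Y \mid V_A, x, w_A, \theta}} \ge \mu) \le \Prob(\tvd{\dist{Y \mid x,\theta}}{\dist{Y \mid \binfun_A, x, \theta}} \ge \mu)$ by conditioning on $\binfun_A$: given $\binfun_A = i$, the conditional-on-$V_A$ distribution is a.s.\ a mixture-consistent refinement of the conditional-on-$\binfun_A=i$ distribution, so the event that its TV distance to $\dist{Y\mid x,\theta}$ exceeds $\mu$ has probability no larger. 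Hence the hypothesis $\Prob(\tvd{\dist{Y \mid x,\theta}}{\dist{Y \mid \binfun_A, x, \theta}} \ge \mu) \le \mu$ implies the desired conclusion $\Prob(\tvd{\dist{Y \mid x,\theta}}{\dist{Y \mid V_A, x, w_A, \theta}} \ge \mu) \le \mu$.

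I expect the main obstacle to be making the simulator-to-posterior argument precise: one must check that the perfect simulator $\mathsf{Sim}_A$ can be taken to depend only on $(x, \binfun_A(x,y), w_A)$ and independent coins --- in particular \emph{not} on $\theta$ or $y$ beyond what $\binfun_A(x,y)$ reveals --- so that conditioning on $V_A$ genuinely factors through the one bit $\binfun_A$. This is exactly the content of honest-but-curious perfect security for the sampling functionality, but one has to be careful that the randomized output is sampled \emph{inside} the functionality (jointly for Alice and Bob) rather than by a party, so that the simulator's knowledge is limited to Alice's own output bit; the cited completeness results for OT deliver precisely this, and I would cite them (e.g.\ \cite{Kilian00, CramerDN15}) for the construction of $\prot$ and its perfect honest-but-curious security, then spell out the one-line data-processing consequence above. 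The correctness claim and Bob's privacy claim require no further work beyond the symmetric repetition.
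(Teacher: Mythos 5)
Your route is essentially the paper's: invoke the completeness of OT correlations for perfect honest-but-curious two-party computation of the pair of randomized decision functions (correctness plus the guarantee that Alice's view is conditionally independent of $Y$ given her input and her output bit), and then observe that the posterior on $Y$ given the full view therefore coincides with the posterior given only $(x,\binfun_A)$, so the tail-probability condition transfers verbatim; the paper carries out exactly this via Bayes' rule, using the security statement $\dist{V_A \mid X=x,\dec_A=i} \equiv \dist{V_A \mid X=x,\dec_A=i,Y=y}$ in place of your simulator formulation, which is equivalent for perfect security. One caution on your closing ``more carefully'' step: the claim that a mixture-consistent refinement of the conditional distribution can only make the event $\{\text{TV} \ge \mu\}$ less probable is false in general --- refining the conditioning can only increase the average TV distance to a fixed reference (e.g.\ the coarse posterior may equal the prior while its refinements are far from it), so no data-processing inequality in that direction is available. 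What saves the argument, and what you should make the operative step, is that under perfect security the refinement is trivial: the Markov chain $Y - (\binfun_A, x, w_A) - V_A$ together with the fact that $\dec_A$ is a function of $V_A$ gives \emph{equality} $\dist{Y \mid V_A=v_A, x, w_A, \theta} = \dist{Y \mid \binfun_A=\decfun_A(v_A), x, \theta}$ (using also the independence of $(W_A,W_B)$ from $(\Theta,X,Y)$), after which the implication is immediate, exactly as in the paper's Bayes computation. With that correction the proposal is sound; the correctness claim and Bob's side need no further comment.
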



\section*{Acknowledgements}
MM and VP acknowledge useful discussions with Dr. Jithin Ravi, Universidad Carlos III de Madrid, Legan\'es, Spain.
VN and VP were supported by the Department of Atomic Energy, Government of India, under project no. 12-R\&D-TFR-5.01-0500.

\onecolumn
\bibliography{references}
\begin{appendices}
\section{Some Useful Lemmas}

We would use the following Lemmas in the proofs of our results.
This section can be referred to when needed in the main proofs.
\begin{lemma}\label{lem:avg-privacy}
	For the joint distribution described in~\eqref{eqn:prot-dist}, the following statements are true.
	\begin{enumerate}
		\item For $\theta = 0, 1$,
		\begin{multline}\label{eqn:avg-priv1}
			\Prob  \left( \tvd{\dist{Y^n | X^n,\Theta = \theta}}{\dist{Y^n | V_A, \Theta = \theta}} \ge \mu \right)  \le \mu\\
			\implies \sum_{x^n \in \cX^n, v_A \in \cV_A} \Prob_{X^n, V_A | \Theta} (x^n, v_A | \theta) \tvd{\dist{Y^n | x^n, \theta}}{\dist{Y^n | v_A, \theta}} \le 2\mu.
		\end{multline}
		\item For $\theta = 0, 1$, and any $x^n \in \cX^n$,
		\begin{multline}\label{eqn:avg-priv2}
			\Prob  \left( \tvd{\dist{Y^n | x^n,\Theta = \theta}}{\dist{Y^n | V_A, x^n, \Theta = \theta}} \ge \mu \right)  \le \mu\\
			\implies \sum_{v_A \in \cV_A} \Prob_{V_A | X^n, \Theta} (v_A | x^n, \theta) \tvd{\dist{Y^n | x^n, \theta}}{\dist{Y^n | v_A, \theta}} \le 2\mu.
		\end{multline}
	\end{enumerate}
\end{lemma}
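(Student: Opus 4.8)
The plan is to recognize each implication as an instance of the elementary ``reverse Markov'' fact: a random variable taking values in $[0,1]$ whose tail at level $\mu$ has probability at most $\mu$ has expectation at most $2\mu$. I would handle part~1 first and then observe that part~2 is the same argument carried out with one extra fixed conditioning.

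\emph{Part 1.} Fix $\theta$ and define the random variable $T \defineqq \tvd{\dist{Y^n \mid X^n, \Theta = \theta}}{\dist{Y^n \mid V_A, \Theta = \theta}}$. First I would note that, since the view $V_A$ contains $X^n$ as a coordinate, $X^n$ is a deterministic function of $V_A$, so $T$ is in fact a deterministic function of $V_A$ alone: at $V_A = v_A$, with $x^n$ the $X^n$-coordinate of $v_A$, it equals $\tvd{\dist{Y^n \mid x^n, \theta}}{\dist{Y^n \mid v_A, \theta}}$. Hence the sum on the right-hand side of~\eqref{eqn:avg-priv1} is exactly the conditional expectation $\sum_{v_A} \Prob_{V_A\mid\Theta}(v_A\mid\theta)\,T(v_A) = \expect{T \mid \Theta=\theta}$, and the hypothesis on the left-hand side is the tail bound $\Prob(T \ge \mu \mid \Theta=\theta) \le \mu$. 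Then I would split according to whether $T<\mu$:
\[
\expect{T \mid \Theta=\theta} = \expect{T\,\mathbf{1}[T<\mu]\mid\Theta=\theta} + \expect{T\,\mathbf{1}[T\ge\mu]\mid\Theta=\theta} \le \mu\cdot 1 + 1\cdot\mu = 2\mu,
\]
using $T<\mu$ on the first event together with $\Prob(\cdot)\le 1$, and $T\le 1$ (total variation distance never exceeds $1$) together with the hypothesis on the second. This is~\eqref{eqn:avg-priv1}.

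\emph{Part 2.} I would run the identical argument with $x^n \in \cX^n$ held fixed: set $T \defineqq \tvd{\dist{Y^n\mid x^n,\Theta=\theta}}{\dist{Y^n\mid V_A, x^n,\Theta=\theta}}$ regarded as a function of $V_A$ under $\dist{\cdot\mid X^n=x^n,\Theta=\theta}$ (noting that conditioning on $V_A=v_A$ already pins the $X^n$-coordinate to $x^n$, so the extra conditioning on $x^n$ is harmless), observe that the right-hand side of~\eqref{eqn:avg-priv2} is $\expect{T\mid X^n=x^n,\Theta=\theta}$ and the left-hand side the matching tail bound, and apply the same two-term split.

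I do not expect a genuine obstacle here. The only step needing a little care is the bookkeeping that identifies the displayed averages with conditional expectations of the TV-distance random variable (which rests on $V_A$ determining $X^n$, so that the quantity depends on $v_A$ only); after that, everything is the standard truncation bound, and the boundedness $\|\cdot\|_{\mathrm{TV}}\le 1$ is what forces the ``bad'' event to contribute only an additional $\mu$. The lemma is isolated purely so that this passage from a high-probability privacy guarantee to an averaged one can be cited cleanly in the proof of Lemma~\ref{lem:binfun-type} and in the achievability argument.
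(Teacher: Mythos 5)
Your proof is correct and is essentially the paper's own argument: the paper likewise splits the average over the ``bad'' set where the total variation distance is at least $\mu$ (bounded using $\Prob(\text{bad}) \le \mu$ and $\|\cdot\|_{\mathrm{TV}} \le 1$) and its complement (where the distance is below $\mu$), yielding $\mu + \mu = 2\mu$, with part~2 handled identically. Your reverse-Markov/truncation phrasing and the remark that $X^n$ is determined by $V_A$ are harmless repackagings of the same decomposition.
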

\begin{proof}
	Define set $S \subseteq \cX^n \times \cV_A$ such that $(x^n, v_A) \in S$ if and only if
		\begin{align*}
			\tvd{\dist{Y^n | x^n,\Theta = \theta}}{\dist{Y^n | v_A, \Theta = \theta}} \ge \mu.
		\end{align*}
		By the assumption $\Prob(S) \le \mu$.
		We can bound the RHS of the statement~\eqref{eqn:avg-priv1} as follows.
		\begin{multline*}
			\sum_{(x^n, v_A) \in S} \Prob_{X^n, V_A | \Theta} (x^n, v_A | \theta) \tvd{\dist{Y^n | x^n, \theta}}{\dist{Y^n | v_A, \theta}}\\
			+ \sum_{(x^n, v_A) \notin S} \Prob_{X^n, V_A | \Theta} (x^n, v_A | \theta) \tvd{\dist{Y^n | x^n, \theta}}{\dist{Y^n | v_A, \theta}}.
		\end{multline*}
		The first term in the above expression can be bounded by $\mu$ since $\Prob(S) \le \mu$ and statistical distance is upper bounded by 1.
		The second term can be bounded by $\mu$ since the statistical distance in the term is at most $\mu$ by the definition of $S$.
		Statement~\eqref{eqn:avg-priv2} lemma can be proved identically.
\end{proof}
\begin{lemma}\label{lem:kld-prob}
	Let $\Q \in \types{\cX \times \cY}{n}$, for a fixed $x^n$ (of appropriate type),
	\begin{align*}
		2^{-n \left(\klc{Q_{Y|X}}{P^\theta_{Y|X}}{\Q_{X}} + |\cX \times \cY|\frac{\log{2n}}{n}\right)} \le \sum_{y^n : (x^n, y^n) \in \typeclass{\cX^n \times \cY^n}{n}{\Q_{XY}}}\Prob_{Y^n | X^n, \Theta}( y^n | x^n, \theta) \le 2^{-n \left(\klc{Q_{Y|X}}{P^\theta_{Y|X}}{\Q_{X}}\right)}
	\end{align*}
\end{lemma}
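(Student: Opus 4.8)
The plan is to prove Lemma~\ref{lem:kld-prob} by a direct type-counting argument, the kind that appears throughout Cover--Thomas. Fix $x^n$ of type $\Q_X$. The left-hand sum is a sum over sequences $y^n$ such that $(x^n,y^n)$ has joint type $\Q_{XY}$; equivalently, the conditional type of $y^n$ given $x^n$ (the empirical distribution of $y$-symbols in the positions where $x^n$ takes each value) is exactly $\Q_{Y|X}$. All such $y^n$ have the same probability under $\Prob_{Y^n|X^n,\Theta}(\cdot\mid x^n,\theta)=\prod_{k}\P^\theta_{Y|X}(y_k\mid x_k)$, namely $2^{-n(\klc{\Q_{Y|X}}{\P^\theta_{Y|X}}{\Q_X}+H(\Q_{Y|X}\mid \Q_X))}$, where $H(\Q_{Y|X}\mid\Q_X)$ is the conditional entropy; this is the standard identity obtained by grouping the product according to the value of $(x_k,y_k)$ and rewriting $\sum \Q_{XY}(a,b)\log\P^\theta_{Y|X}(b\mid a)$ as $-\klc{\Q_{Y|X}}{\P^\theta_{Y|X}}{\Q_X}-H(\Q_{Y|X}\mid\Q_X)$. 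So the whole sum equals $N\cdot 2^{-n(\klc{\Q_{Y|X}}{\P^\theta_{Y|X}}{\Q_X}+H(\Q_{Y|X}\mid\Q_X))}$, where $N$ is the number of $y^n$ with the prescribed conditional type given $x^n$.

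The remaining ingredient is the two-sided bound on $N$, the size of the conditional type class. First I would note the upper bound: since $N$ is a product over the symbol-classes of $x^n$ of multinomial coefficients, each of the form $\binom{n_a}{\,\cdot\,}$ with $\sum_a n_a=n$, and each multinomial coefficient on $m$ items into $|\cY|$ classes is at most $2^{mH(\text{that type})}$, multiplying over $a$ gives $N\le 2^{nH(\Q_{Y|X}\mid\Q_X)}$. For the lower bound, the standard type-counting bound (Cover--Thomas, Theorem on type class size) gives, for each symbol $a$ appearing $n_a$ times, a multinomial coefficient at least $(n_a+1)^{-|\cY|}2^{n_aH(\Q_{Y|X=a})}$; multiplying over the at most $|\cX|$ values of $a$ and using $n_a\le n$ yields $N\ge (n+1)^{-|\cX||\cY|}2^{nH(\Q_{Y|X}\mid\Q_X)}\ge 2^{-|\cX\times\cY|\log(2n)}\,2^{nH(\Q_{Y|X}\mid\Q_X)}$ for $n\ge 1$ (bounding $(n+1)^{|\cX||\cY|}\le (2n)^{|\cX\times\cY|}$). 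Substituting these two bounds on $N$ into $N\cdot 2^{-n(\klc{\Q_{Y|X}}{\P^\theta_{Y|X}}{\Q_X}+H(\Q_{Y|X}\mid\Q_X))}$ makes the $H(\Q_{Y|X}\mid\Q_X)$ terms cancel and produces exactly the claimed inequalities
\begin{align*}
2^{-n\left(\klc{\Q_{Y|X}}{\P^\theta_{Y|X}}{\Q_X}+|\cX\times\cY|\frac{\log 2n}{n}\right)}\le \text{(sum)}\le 2^{-n\,\klc{\Q_{Y|X}}{\P^\theta_{Y|X}}{\Q_X}}.
\end{align*}

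I do not expect a real obstacle here; this is a conditional-type-class version of a textbook lemma. The one point that needs a little care is the bookkeeping when passing from the joint type $\Q_{XY}$ to the conditional type $\Q_{Y|X}$: one must check that ``$(x^n,y^n)$ has joint type $\Q_{XY}$ and $x^n$ has type $\Q_X$'' is the same condition as ``$y^n$ restricted to the positions where $x$ equals $a$ has type $\Q_{Y|X=a}$ for every $a$,'' so that the count factors as a product of multinomial coefficients over the symbol-classes of $x^n$. Given that, the entropy cancellation and the $(2n)^{|\cX\times\cY|}$ slack absorbing the polynomial factor are mechanical. The only mild subtlety is making sure the chosen slack term $|\cX\times\cY|\frac{\log 2n}{n}$ in the exponent genuinely dominates $\frac{|\cX||\cY|\log(n+1)}{n}$ for all $n\ge 1$, which it does since $|\cX||\cY|=|\cX\times\cY|$ and $\log(n+1)\le\log(2n)$ for $n\ge 1$.
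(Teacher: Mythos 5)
Your proposal is correct and follows essentially the same route as the paper's proof: write the sum as (size of the conditional type class) times the common probability of each sequence in it, bound the conditional type class size by a product of per-symbol type-class bounds from Cover--Thomas, and absorb the resulting polynomial factor into the $|\cX\times\cY|\frac{\log 2n}{n}$ slack. No substantive differences.
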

\begin{proof}
	For any $(x^n, y^n) : \type{x^n, y^n} = \Q_{XY}$,
	\begin{align*}
		\Prob_{Y^n | X^n, \Theta}( y^n | x^n, \theta) = \prod_{(x, y) \in \cX \times \cY} \left(\Prob_{Y | X, \Theta}( y | x, \theta)\right)^{n \cdot \Q_{XY}(x, y)}
	\end{align*}
	Hence,
	\begin{align}
		\sum_{y^n : \type{(x^n, y^n)} = \Q_{XY}}\Prob_{Y^n | X^n, \Theta}( y^n | x^n, \theta) = |\{y^n : \type{(x^n, y^n)} = \Q_{XY}\}| \cdot \prod_{(x, y) \in \cX \times \cY} \left(\Prob_{Y | X, \Theta}( y | x, \theta)\right)^{n \cdot \Q_{XY}(x, y)}.\label{eqn:cond-type-prob}
	\end{align}
	Before we bound the size of the set $\{y^n : \type{(x^n, y^n)} = \Q_{XY}\}$, we quote the following theorem verbatim from~\cite{CoverT06}.
	\begin{thm}\cite[Theorem 11.1.3]{CoverT06}\label{thm:cover-typeclass}
		For any type class $\Q_X \in \types{\cX}{n}$,
		\begin{align*}
			\frac{1}{(n + 1)^{|\cX|}} 2^{n H(\Q_X)} \le \left|\typeclass{\cX}{n}{\Q_X}\right| \le 2^{n H(\Q_X)}.
		\end{align*}
	\end{thm}
	Observe that,
	\begin{align*}
		|\{y^n : \type{(x^n, y^n)} = \Q_{XY}\}| = \prod_{x \in \cX} \left|\left\{y^{n \cdot \Q_X(x)} \in \cY^{n \cdot \Q_X(x)} : \type{y^{n \cdot \Q_X(x)}} = \Q_{Y|X = x}\right\}\right|.
	\end{align*}
	By using Theorem~\ref{thm:cover-typeclass}, we may bound this as
	\begin{align*}
		\prod_{x \in \cX} \frac{1}{(n + 1)^{|\cX|}}2^{n \cdot \Q_X(x) \cdot H(\Q_{Y | X = x})} \le |\{y^n : \type{(x^n, y^n)} = \Q_{XY}\}| \le \prod_{x \in \cX} 2^{n \cdot \Q_X(x) \cdot H(\Q_{Y | X = x})}.
	\end{align*}
	Using the above observation and the equality~\eqref{eqn:cond-type-prob}, we get the following lower bound,
	\begin{align*}
		&\log{\sum_{y^n : \type{(x^n, y^n)} = \Q_{XY}}\Prob_{Y^n | X^n, \Theta}( y^n | x^n, \theta)}\\ 
		\ge & \sum_{x \in \cX} -n \cdot \Q_X(x) \left(\sum_{y \in \cY} Q_{Y|X}(y|x) \log{Q_{Y|X}(y|x)} + \frac{|\cY|\log{2n}}{n}\right) + \sum_{(x, y) \in \cX \times \cY} n \cdot \Q_{XY}(x, y) \log{\P^{\theta}_{Y|X}(y|x)}\\
		= & -n \left(\sum_{(x, y) \in \cX \times \cY} \Q_{XY}(x, y) \log{\frac{\Q_{Y|X}(y|x)}{\P^\theta_{Y|X}(y|x)}} + \frac{|\cX \times \cY|\log{2n}}{n}\right) = -n \left(\klc{Q_{Y|X}}{P^\theta_{Y|X}}{\Q_{X}} + |\cX \times \cY|\frac{\log{2n}}{n}\right).
	\end{align*}
	Using the above observation and the equality~\eqref{eqn:cond-type-prob}, we get the following upper bound,
	\begin{align*}
		&\log{\sum_{y^n : \type{(x^n, y^n)} = \Q_{XY}}\Prob_{Y^n | X^n, \Theta}( y^n | x^n, \theta)}\\ 
		\le & \sum_{x \in \cX} -n \cdot \Q_X(x) \left(\sum_{y \in \cY} Q_{Y|X}(y|x) \log{Q_{Y|X}(y|x)}\right) + \sum_{(x, y) \in \cX \times \cY} n \cdot \Q_{XY}(x, y) \log{\P^{\theta}_{Y|X}(y|x)}\\
		= & -n \left(\sum_{(x, y) \in \cX \times \cY} \Q_{XY}(x, y) \log{\frac{\Q_{Y|X}(y|x)}{\P^\theta_{Y|X}(y|x)}}\right) = -n \left(\klc{Q_{Y|X}}{P^\theta_{Y|X}}{\Q_{X}}\right).
	\end{align*}
	This proves the claim.
\end{proof}

\section{Example: Missing Proofs}
\begin{proof}[Proof of Claim~\ref{clm:eg-priv}]
	The statistical distance given in the claim may be expanded as follows.
	\begin{align*}
		& \tvd{\dist{Z^n, V_A^{\prot} (\hat{X}^n, Z^n)}}{\dist{Z^n, V_A^{\prot}(\hat{X}^n, \hat{Y}^n)}}\\
		& = \tvd{\dist{Z^n, \hat{X}^n, V_A^{\prot} (\hat{X}^n, Z^n)}}{\dist{Z^n, \hat{X}^n, V_A^{\prot}(\hat{X}^n, \hat{Y}^n)}}\\
		& = \frac{1}{2}\sum_{(z^n, x^n, v_A) \in \cW^n \times \cX^n \times \cV_A} \left|\Prob_{Z^n, \hat{X}^n, V_A^{\prot}(\hat{X}^n, Z^n)} (z^n, x^n, v_A) - \Prob_{Z^n, \hat{X}^n, V_A^{\prot}(\hat{X}^n, \hat{Y}^n)} (z^n, x^n, v_A)\right|\\
		& \stackrel{(a)}{=} \; \frac{1}{2}\sum_{(z^n, x^n, v_A) \in \cW^n \times \cX^n \times \cV_A} \left|\Prob_{Z^n, \hat{X}^n} (z^n, x^n) \cdot \left(\Prob_{V_A^{\prot}(\hat{X}^n, Z^n)| \hat{X}^n, Z^n}(v_A | x^n, z^n) - \Prob_{V_A^{\prot}(\hat{X}^n, \hat{Y}^n)| \hat{X}^n}(v_A | x^n) \right)\right|\\
		& = \; \frac{1}{2}\sum_{(x^n, z^n) \in \cX^n \times \cW^n} \Prob(x^n, z^n) \cdot \sum_{v_A \in \cV_A}\left|\Prob_{V_A^{\prot}(\hat{X}^n, Z^n)| \hat{X}^n, Z^n}(v_A | x^n, z^n) - \Prob_{V_A^{\prot}(\hat{X}^n, \hat{Y}^n)| \hat{X}^n}(v_A | x^n) \right|\\
		& \stackrel{(b)}{=} \; \frac{1}{2}\sum_{(x^n, z^n) \in \cX^n \times \cW^n} \Prob(x^n, z^n) \cdot \sum_{v_A \in \cV_A}\left| \text{Term 1} - \text{Term 2}\right|
		\intertext{where, }
		\text{Term 1 } \MM{=} & \; \frac{\Prob_{ V_A^{\prot}(\hat{X}^n, Z^n) | \hat{X}^n}(v_A | x^n) \cdot \Prob_{Z^n| V_A^{\prot}(\hat{X}^n, Z^n), \hat{X}^n}(z^n | v_A, x^n)}{\Prob_{Z^n | \hat{X}^n} (z^n | x^n)}\\
		\text{Term 2 } \MM{=} & \; \frac{\Prob_{V_A^{\prot}(\hat{X}^n, \hat{Y}^n)| \hat{X}^n}(v_A | x^n) \cdot \Prob_{Z^n | \hat{X}^n} (z^n | x^n)}{\Prob_{Z^n | \hat{X}^n} (z^n | x^n)}.
	\end{align*}
	In (a), we used the independence of $\hat{Y}^n$ and $Z^n$ and (b) expands the conditional probability using Bayes's Theorem.
	Since $(\hat{X}^n, \hat{Y}^n)$ and $(\hat{X}^n, Z^n)$ are identically distributed, for all $x^n, v_a$,
	\begin{align*}
		\Prob_{ V_A^{\prot}(\hat{X}^n, Z^n) | \hat{X}^n}(v_A | x^n) = \Prob_{ V_A^{\prot}(\hat{X}^n, \hat{Y}^n) | \hat{X}^n}(v_A | x^n) \text{ for all } x^n \in \cX^n, v_A \in \cV_A.
	\end{align*}
	Using this, the above expression can be simplified as,
	\begin{align*}
		& \frac{1}{2}\sum_{(x^n, z^n) \in \cX^n \times \cW^n} \Prob(x^n, z^n) \cdot \sum_{v_A \in \cV_A} \frac{\Prob_{ V_A^{\prot}(\hat{X}^n, Z^n) | \hat{X}^n}(v_A | x^n)}{\Prob_{Z^n | \hat{X}^n} (z^n | x^n)}\left|\Prob_{Z^n| V_A^{\prot}(\hat{X}^n, Z^n), \hat{X}^n}(z^n | v_A, x^n) - \Prob_{Z^n | \hat{X}^n}(z^n | x^n) \right|\\
		= & \frac{1}{2}\sum_{x^n \in \cX^n} \Prob(x^n) \cdot \sum_{v^A \in \cV_A} \Prob_{V_A^{\prot}(\hat{X}^n, Z^n) | \hat{X}^n} (v_A | x^n) \cdot \sum_{z^n \in \cW^n} \left|\Prob_{Z^n| V_A^{\prot}(\hat{X}^n, Z^n), \hat{X}^n}(z^n | v_A, x^n) - \Prob_{Z^n | \hat{X}^n}(z^n | x^n) \right|\\
		\stackrel{(a)}{=} & \frac{1}{2}\sum_{(x^n, v_A) \in \cX^n \times \cV_A} \Prob_{\hat{X}^n, V_A | \Theta = 0} (x^n, v_A) \cdot \sum_{z^n \in \cW^n} \left|\Prob_{Z^n| V_A^{\prot}(\hat{X}^n, Z^n), \hat{X}^n}(z^n | v_A, x^n) - \Prob_{Z^n | \hat{X}^n}(z^n | x^n) \right|\\
		\stackrel{(b)}{=} & \sum_{(x^n, v_A) \in \cX^n \times \cV_A} \Prob_{\hat{X}^n, V_A | \Theta = 0} (x^n, v_A) \cdot \tvd{\dist{z^n | v_A, \Theta = 0}} {\dist{z^n | x^n, \Theta = 0}} \stackrel{(c)}{\le} 2 \mu.
	\end{align*}
	In (a), we used the equality $\Prob(x^n) \cdot \Prob_{V_A^{\prot}(\hat{X}^n, Z^n) | \hat{X}^n} (v_A | x^n) = \Prob_{V_A^{\prot}(\hat{X}^n, Z^n), \hat{X}^n} (v_A, x^n)$.
	This follows from the description of the joint distribution~\eqref{eqn:prot-dist} and the fact that $\Theta = 0$ since $\hat{X}^n$ and $Z^n$ are independent.
	(b) uses the definition of statistical distance and (c) follows from Lemma~\ref{lem:avg-privacy}.

\end{proof}

\section{Converse: Missing Proofs}

\begin{proof}[Proof of Lemma~\ref{lem:binfun-type}]
	The correctness condition (i) follows directly from the definition of $\binfun_A$ and $\binfun_B$ and $\delta$-correctness of $\prot$.
	We prove statement (ii) by an averaging argument.
	Since $\dec_A = \decfun_A(V_A)$ (resp. $\dec_B = \decfun_B(V_B)$) is a deterministic function of $V_A$ (resp. $V_B$), and since $\Prob_{\binfun_A | X^n, Y^n} (i | x^n, y^n) = \Prob_{\dec_A | X^n, Y^n}(i | x^n, y^n)$ for all $x^n, y^n$,
	\begin{align*}
	\Prob_{Y^n|\binfun_A, X^n, \Theta}(y^n | i, x^n, \theta) &= \Prob_{Y^n | \dec_A, X^n, \Theta}(y^n | i, x^n, \theta) = \sum_{v_A \in \cV_A}\Prob_{Y^n, V_A | \dec_A, X^n, \Theta}(y^n, v_A | i, x^n, \theta)\\
	& = \sum_{v_A \in \cV_A}\Prob_{V_A | \dec_A, X^n, \Theta}(v_A | i, x^n, \theta) \cdot \Prob_{Y^n | V_A, \dec_A, X^n, \Theta}(y^n | v_A, i, x^n, \theta)\\
	& \stackrel{(a)}{=} \sum_{v_A: \decfun_A(v_A) = i}\frac{\Prob_{V_A | X^n, \Theta}(v_A | x^n, \theta)}{\Prob_{\dec_A | X^n, \Theta} (i | x^n, \theta)} \cdot \Prob_{Y^n | V_A, \Theta}(y^n | v_A, \theta)\\
	\end{align*}
	Here, (a) crucially uses the fact that $x^n, i$ are part of the view $v_A$. Using the above observation we proceed as follows,
	\begin{align*}
		& \sum_{i = 0}^1\Prob_{\binfun_A | X^n, \Theta} (i | x^n, \theta) \cdot \left(\tvd{\dist{Y^n|x^n, \theta}}{\dist{Y^n|\binfun_A = i, x^n, \theta}}\right)\\
		= & \sum_{i = 0}^1 \Prob_{\dec_A | X^n, \Theta} (i | x^n, \theta) \cdot \frac{1}{2} \sum_{y^n \in \cY^n} \left| \Prob_{Y^n | X^n, \Theta}(y^n | x^n, \theta) - \Prob_{Y^n|\binfun_A, X^n, \Theta}(y^n | i, x^n, \theta)\right|\\
		= & \sum_{i = 0}^1 \Prob_{\dec_A | X^n, \Theta} (i | x^n, \theta) \cdot \frac{1}{2} \sum_{y^n \in \cY^n} \left| \Prob_{Y^n | X^n, \Theta}(y^n | x^n, \theta) - \sum_{v_A: \decfun_A(v_A) = i}\frac{\Prob_{V_A | X^n, \Theta}(v_A | x^n, \theta)}{\Prob_{\dec_A | X^n, \Theta} (i | x^n, \theta)} \cdot \Prob_{Y^n | V_A, \Theta}(y^n | v_A, \theta)\right|\\
		\stackrel{(a)}{\le} & \sum_{i = 0}^1 \Prob_{\dec_A | X^n, \Theta} (i | x^n, \theta) \cdot \frac{1}{2} \sum_{y^n \in \cY^n} \sum_{v_A: \decfun_A(v_A) = i}\frac{\Prob_{V_A | X^n, \Theta}(v_A | x^n, \theta)}{\Prob_{\dec_A | X^n, \Theta} (i | x^n, \theta)} \cdot \left| \Prob_{Y^n | X^n, \Theta}(y^n | x^n, \theta) - \Prob_{Y^n | V_A, \Theta}(y^n | v_A, \theta)\right|\\
		= & \sum_{i = 0}^1 \sum_{v_A: \decfun_A(v_A) = i}\Prob_{V_A | X^n, \Theta}(v_A | x^n, \theta) \cdot \frac{1}{2} \sum_{y^n \in \cY^n} \cdot \left| \Prob_{Y^n | X^n, \Theta}(y^n | x^n, \theta) - \Prob_{Y^n | V_A, \Theta}(y^n | v_A, \theta)\right|\\
		= & \sum_{v_A \in \cV}\Prob_{V_A | X^n, \Theta}(v_A | x^n, \theta) \cdot \frac{1}{2} \sum_{y^n \in \cY^n} \cdot \left| \Prob_{Y^n | X^n, \Theta}(y^n | x^n, \theta) - \Prob_{Y^n | V_A, \Theta}(y^n | v_A, \theta)\right|\\
		= & \sum_{v_A \in \cV}\Prob_{V_A | X^n, \Theta}(v_A | x^n, \theta) \cdot \tvd{\dist{Y^n | x^n, \theta}}{\dist{Y^n | v_A, \theta}} \stackrel{(b)}{\le} 2\mu.
	\end{align*}
	Here, (a) follows from Jensen's inequality for absolute value function and (b) follows from Lemma~\ref{lem:avg-privacy}.
	This proves the lemma.
\end{proof}
\begin{proof}[Proof of Lemma~\ref{lem:binfun-seq}]
	We will first prove the following two claims.
	\newtheorem*{clm:binfun-seq-corr}{Claim \ref{clm:binfun-seq-corr}}
	\begin{clm:binfun-seq-corr}
		For $\theta = 0, 1$, for large enough $n$ in the sequence $(n_i)_{i \in \mathbb{N}}$, if $\Q_{XY} \in \types{\cX \times \cY}{n}$ and $\kl{\Q_{XY}}{\P^\theta_{XY}} < \expc$, $\Prob(X^n \in S | X^n \in \typeclass{\cX}{n}{\Q_X}, \Theta = \theta) \ge \frac{99}{100}$, where, 
		\begin{align*}
			S = \{x^n \in \typeclass{\cX}{n}{\Q_X} : \frac{\sum_{y^n : (x^n, y^n) \in \typeclass{\cX \times \cY}{n}{\Q_{XY}}} \Prob(\binfun_A = \theta | x^n, y^n)}{|\{y^n : (x^n, y^n) \in \typeclass{\cX \times \cY}{n}{\Q_{XY}}\}|} \ge \frac{99}{100}\}.
		\end{align*}
	\end{clm:binfun-seq-corr}
	\begin{proof}
		For $\tau > 0$, let $\kl{\Q_{XY}}{\P^\theta_{XY}} = \expc - \tau$. For some $\tau' < \tau$, choose $n$ from the sequence $(n_i)_{i \in \mathbb{N}}$ such that $\expc - \tau' \le -\frac{1}{n} \log{\delta_n}$.
		Hence, we have, $\Prob_{\binfun_A^n | \Theta} (1 - \theta | \theta) \le 2^{-n \cdot (\expc - \tau')}$.
		\begin{align*}
			2^{-n \cdot (\expc - \tau')} & \ge \Prob_{\binfun_A^n | \Theta} (1 - \theta | \theta)\\
			& \ge \Prob\left((X^n, Y^n) \in \typeclass{\cX \times \cY}{n}{\Q_{XY}} | \Theta = \theta\right) \Prob\left(\binfun_A^n = 1 - \theta | (X^n, Y^n) \in \typeclass{\cX \times \cY}{n}{\Q_{XY}}, \Theta = \theta\right)
		\end{align*}
		We expand the last expression as,
		\begin{multline*}
			\Prob\left((X^n, Y^n) \in \typeclass{\cX \times \cY}{n}{\Q_{XY}} | \Theta = \theta\right) \\\sum_{(x^n, y^n) \in \typeclass{\cX \times \cY}{n}{\Q_{XY}}} \Prob\left((X^n, Y^n) = (x^n, y^n) | (X^n, Y^n) \in \typeclass{\cX \times \cY}{n}{\Q_{XY}}, \Theta = \theta\right)\Prob\left(\binfun_A^n = 1 - \theta | (X^n, Y^n) = (x^n, y^n), \Theta = \theta\right).
		\end{multline*}
		In the above expression, since the probability of every member of a typeclass is the same irrespective of the hypothesis, $\Prob\left((X^n, Y^n) = (x^n, y^n) | (X^n, Y^n) \in \typeclass{\cX \times \cY}{n}{\Q_{XY}}, \Theta = \theta\right) = \frac{1}{\left|\typeclass{\cX \times \cY}{n}{\Q_{XY}}\right|}$. 
		Additionally, $\binfun_A$ is independent of $\Theta$ conditioned on $(x^n, y^n)$.
		Hence, from the above observations we get the following inequality.
		\begin{align*}
			2^{-n \cdot (\expc - \tau')} \ge \Prob\left((X^n, Y^n) \in \typeclass{\cX \times \cY}{n}{\Q_{XY}} | \Theta = \theta\right) \cdot
			\frac{1}{\left|\typeclass{\cX \times \cY}{n}{\Q_{XY}}\right|} \sum_{(x^n, y^n) \in \typeclass{\cX \times \cY}{n}{\Q_{XY}}} \Prob\left(\binfun_A^n = 1 - \theta | (X^n, Y^n) = (x^n, y^n)\right).
		\end{align*}
		By Theorem 11.1.4 in \cite{CoverT06}, we have,
		\begin{align*}
			\Prob\left((X^n, Y^n) \in \typeclass{\cX \times \cY}{n}{\Q_{XY}} | \Theta = \theta\right) \ge \frac{1}{(n + 1)^{|\cX \times \cY|}} 2^{-n \cdot \kl{\Q_{XY}}{\P^\theta_{XY}}} = \frac{1}{(n + 1)^{|\cX \times \cY|}} 2^{-n \cdot (\expc - \tau)}.
		\end{align*}
		Hence,
		\begin{multline*}
			\frac{1}{\left|\typeclass{\cX \times \cY}{n}{\Q_{XY}}\right|} \sum_{(x^n, y^n) \in \typeclass{\cX \times \cY}{n}{\Q_{XY}}} \Prob\left(\binfun_A^n = 1 - \theta | (X^n, Y^n) = (x^n, y^n)\right)\\ \le \frac{1}{(n + 1)^{|\cX \times \cY|}}2^{-n \cdot (\expc - \tau')} \cdot 2^{n \cdot (\expc - \tau)} = \frac{1}{(n + 1)^{|\cX \times \cY|}}2^{-n \cdot (\tau - \tau')}.
		\end{multline*}
		Choose $n$ large enough to guarantee, $\frac{1}{(n + 1)^{|\cX \times \cY|}}2^{-n \cdot (\tau - \tau')} \le \frac{1}{10^4}$.
		Towards a contradiction, suppose $\Prob(X^n \in S | X^n \in \Q_X, \Theta = \theta) < \frac{99}{100}$.
		Then,
		\begin{align*}
			& \frac{1}{\left|\typeclass{\cX \times \cY}{n}{\Q_{XY}}\right|} \sum_{(x^n, y^n) \in \typeclass{\cX \times \cY}{n}{\Q_{XY}}} \Prob\left(\binfun_A^n = 1 - \theta | (X^n, Y^n) = (x^n, y^n)\right)\\
			\ge & \frac{1}{\left|\typeclass{\cX \times \cY}{n}{\Q_{XY}}\right|} \sum_{x^n \notin S} \frac{\sum_{y^n : (x^n, y^n) \in \typeclass{\cX \times \cY}{n}{\Q_{XY}}} \Prob(\binfun_A = 1 - \theta | x^n, y^n)}{|\{y^n : (x^n, y^n) \in \typeclass{\cX \times \cY}{n}{\Q_{XY}}\}|} \cdot |\{y^n : (x^n, y^n) \in \typeclass{\cX \times \cY}{n}{\Q_{XY}}\}|\\
			> & \frac{1}{100} \frac{\sum_{x^n \notin S}|\{y^n : (x^n, y^n) \in \typeclass{\cX \times \cY}{n}{\Q_{XY}}\}|}{\left|\typeclass{\cX \times \cY}{n}{\Q_{XY}}\right|}
			= \frac{1}{100} \frac{\sum_{x^n \notin S}|\{y^n : (x^n, y^n) \in \typeclass{\cX \times \cY}{n}{\Q_{XY}}\}|}{\sum_{x^n \in \typeclass{\cX}{n}{\Q_X}}|\{y^n : (x^n, y^n) \in \typeclass{\cX \times \cY}{n}{\Q_{XY}}\}|}\\
			& \stackrel{(a)}{=} \frac{1}{100} \frac{|S|}{|\typeclass{\cX}{n}{\Q_X}|} \stackrel{(b)}{=} \frac{1}{100} \Prob(X^n \notin S | X^n \in \typeclass{\cX}{n}{\Q_X}, \Theta = \theta) > \frac{1}{10^4}.
		\end{align*}
		In (a), we crucially use the fact that the size of the set $\{y^n : (x^n, y^n) \in \typeclass{\cX \times \cY}{n}{\Q_{XY}}\}$ is the same for all $x^n \in \typeclass{\cX}{n}{\Q_X}$. In (b), we use the fact that $\frac{|S|}{|\typeclass{\cX}{n}{\Q_X}|} = \Prob(X^n \notin S | X^n \in \typeclass{\cX}{n}{\Q_X}, \Theta = \theta)$ irrespective of the value of $\theta$.
		The above contradiction proves the claim.
	\end{proof}
	\newtheorem*{clm:binfun-seq-priv}{Claim \ref{clm:binfun-seq-priv}}
	\begin{clm:binfun-seq-priv}
		For $\theta = 0, 1$, for all large values $n$ in the sequence $(n_i)_{i \in \mathbb{N}}$, if $\Q_{XY} \in \types{\cX \times \cY}{n}, \kl{\Q_X}{\P^\theta_X} < \expc$, and $\klc{\Q_{Y|X}}{\P^\theta_{Y|X}}{\Q(x)} < \expp$ then, $\Prob(X^n \in S | X^n \in \typeclass{\cX}{n}{\Q_X}, \Theta = \theta) \ge \frac{99}{100}$, where, 
		\begin{align*}
			S = \{x^n \in \typeclass{\cX}{n}{\Q_X} : \frac{\sum_{y^n : (x^n, y^n) \in \typeclass{\cX \times \cY}{n}{\Q_{XY}}} \Prob(\binfun_A = \theta | x^n, y^n)}{|\{y^n : (x^n, y^n) \in \typeclass{\cX \times \cY}{n}{\Q_{XY}}\}|} \ge \frac{80}{100}\}.
		\end{align*}
	\end{clm:binfun-seq-priv}
	\begin{proof}
		For $\tau > 0$, let $\klc{\Q_{Y|X}}{\P^\theta_{Y|X}}{\Q(x)} = \expp - \tau$. For some $\tau' < \tau$, choose $n$ from the sequence $(n_i)_{i \in \mathbb{N}}$ such that $\expp - \tau' \le -\frac{1}{n} \log{\mu_n}$.
		Note that, $\kl{\Q_X \cdot \P^\theta_{Y | X}}{\P^\theta_{XY}} = \kl{\Q_X}{\P^\theta_X} < \expc$.
		We would also require $n$ to be large enough that by to Claim~\ref{clm:binfun-seq-corr}, $\Prob(X^n \in R | X^n \in \typeclass{\cX}{n}{\Q_X}, \Theta = \theta) \ge \frac{99}{100}$, where, 
		\begin{align}
			R = \{x^n \in \typeclass{\cX}{n}{\Q_X} : \frac{\sum_{y^n : (x^n, y^n) \in \typeclass{\cX \times \cY}{n}{\Q_{X} \cdot \P^{\theta}_{Y | X}}} \Prob(\binfun_A^n = \theta | x^n, y^n)}{|\{y^n : (x^n, y^n) \in \typeclass{\cX \times \cY}{n}{\Q_{X} \cdot \P^{\theta}_{Y | X}}\}|} \ge \frac{99}{100}\}.\label{eqn:good-x}
		\end{align}
		Consider $x^n \in R$, we will first establish a lower bound of $\frac{95}{100}$ on $\Prob_{\binfun_A^n | X^n, \Theta}(\theta | x^n, \theta)$.
		Towards a contradiction, suppose $\Prob_{\binfun_A^n | X^n, \Theta}(\theta | x^n, \theta) < \frac{95}{100}$.
		We note that the value of $\Prob_{Y^n | X^n, \Theta}(y^n |x^n, \theta)$ is the same for all $(x^n, y^n) \in \typeclass{\cX \times \cY}{n}{\Q_{X} \cdot \P_{Y|X}}$.
		Let $t$ denote this value.
		Then,
		\begin{align}
			& \sum_{y^n : (x^n, y^n) \in \typeclass{\cX \times \cY}{n}{\Q_{X} \cdot \P_{Y|X}}} \Prob_{Y^n | \binfun_A^n, X^n, \Theta}(y^n | 1 - \theta, x^n, \theta) \nonumber\\
			= & \sum_{y^n : (x^n, y^n) \in \typeclass{\cX \times \cY}{n}{\Q_{X} \cdot \P_{Y|X}}} \frac{\Prob_{Y^n | X^n, \Theta}(y^n |x^n, \theta) \cdot \Prob_{\binfun_A^n | X^n, Y^n, \Theta} (1 - \theta | x^n, y^n, \theta)}{\Prob_{\binfun_A^n | X^n, \Theta}(1 - \theta | x^n, \theta)} \nonumber\\
			\stackrel{(a)}{=} &t \cdot |\{y^n : (x^n, y^n) \in \typeclass{\cX \times \cY}{n}{\Q_{X} \cdot \P^{\theta}_{Y | X}}\}| \\
			& \qquad \sum_{y^n : (x^n, y^n) \in \typeclass{\cX \times \cY}{n}{\Q_{X} \cdot \P_{Y|X}}} \frac{\Prob_{\binfun_A^n | X^n, Y^n, \Theta} (1 - \theta | x^n, y^n, \theta)}{\Prob_{\binfun_A^n | X^n, \Theta}(1 - \theta | x^n, \theta) \cdot |\{y^n : (x^n, y^n) \in \typeclass{\cX \times \cY}{n}{\Q_{X} \cdot \P^{\theta}_{Y | X}}\}|} \nonumber\\
			\stackrel{(b)}{=} & \sum_{y^n : (x^n, y^n) \in \typeclass{\cX \times \cY}{n}{\Q_{X} \cdot \P_{Y|X}}} \Prob_{Y^n | X^n, \Theta}(y^n |x^n, \theta) \frac{\sum_{y^n : (x^n, y^n) \in \typeclass{\cX \times \cY}{n}{\Q_{X} \cdot \P_{Y|X}}}\Prob_{\binfun_A^n | X^n, Y^n, \Theta} (1 - \theta | x^n, y^n, \theta)}{\Prob_{\binfun_A^n | X^n, \Theta}(1 - \theta | x^n, \theta) \cdot |\{y^n : (x^n, y^n) \in \typeclass{\cX \times \cY}{n}{\Q_{X} \cdot \P^{\theta}_{Y | X}}\}|} \nonumber\\
			\stackrel{(c)}{\le} & \frac{\frac{1}{100}}{\frac{5}{100}} \sum_{y^n : (x^n, y^n) \in \typeclass{\cX \times \cY}{n}{\Q_{X} \cdot \P_{Y|X}}} \Prob_{Y^n | X^n, \Theta}(y^n |x^n, \theta) = \frac{1}{5} \sum_{y^n : (x^n, y^n) \in \typeclass{\cX \times \cY}{n}{\Q_{X} \cdot \P_{Y|X}}} \Prob_{Y^n | X^n, \Theta}(y^n |x^n, \theta).\label{eqn:bound-alpha-converse}
		\end{align}
		Equalities (a) and (b) follow from the definition of $t$.
		In (c), we use the fact that $x^n \in R$ (see definition of set $R$ in~\eqref{eqn:good-x}), and our assumption that $\Prob_{\binfun_A^n | X^n, \Theta}(\theta | x^n, \theta) \le \frac{95}{100}$.
		By the privacy condition (ii), we have,
		\begin{align*}
			2 \mu_n & \ge \sum_{i \in \{0, 1\}} \Prob_{\binfun_A^n | X^n, \Theta} (i | x^n, \theta) \cdot \tvd{\dist{Y^n|x^n, \theta}}{\dist{Y^n|\binfun_A^n = i, x^n, \theta}}\\
			& \ge \Prob_{\binfun_A^n | X^n, \Theta} (1 - \theta | x^n, \theta) \cdot \tvd{\dist{Y^n|x^n, \theta}}{\dist{Y^n|\binfun_A^n = 1 - \theta, x^n, \theta}}\\
			& \ge \frac{5}{100} \cdot \frac{1}{2} \sum_{y^n \in \cY^n} \left| \Prob_{Y^n | X^n, \Theta}(y^n | x^n, \theta) - \Prob_{Y^n | \binfun_A^n, X^n, \Theta}(y^n | 1 - \theta, x^n, \theta)\right|\\
			& \ge \frac{5}{200} \sum_{y^n : (x^n, y^n) \in \typeclass{\cX \times \cY}{n}{\Q_{X} \cdot \P_{Y|X}}} \left| \Prob_{Y^n | X^n, \Theta}(y^n | x^n, \theta) - \Prob_{Y^n | \binfun_A^n, X^n, \Theta}(y^n | 1 - \theta, x^n, \theta)\right|\\
			& \stackrel{(a)}{=} \frac{5}{200} \cdot t \cdot \sum_{y^n : (x^n, y^n) \in \typeclass{\cX \times \cY}{n}{\Q_{X} \cdot \P_{Y|X}}} \left| 1 - \frac{\Prob_{Y^n | \binfun_A^n, X^n, \Theta}(y^n | 1 - \theta, x^n, \theta)}{t}\right|\\
			& \stackrel{(b)}{\ge} \frac{5}{200} \cdot t \cdot \left| \sum_{y^n : (x^n, y^n) \in \typeclass{\cX \times \cY}{n}{\Q_{X} \cdot \P_{Y|X}}} \left(1 - \frac{\Prob_{Y^n | \binfun_A^n, X^n, \Theta}(y^n | 1 - \theta, x^n, \theta)}{t}\right)\right|\\
			& = \frac{5}{200} \sum_{y^n : (x^n, y^n) \in \typeclass{\cX \times \cY}{n}{\Q_{X} \cdot \P_{Y|X}}} \Prob_{Y^n | X^n, \Theta}(y^n |x^n, \theta) - \sum_{y^n : (x^n, y^n) \in \typeclass{\cX \times \cY}{n}{\Q_{X} \cdot \P_{Y|X}}} \Prob_{Y^n | \binfun_A^n, X^n, \Theta}(y^n | 1 - \theta, x^n, \theta)\\
			& \stackrel{(c)}{\ge} \frac{5}{200} \sum_{y^n : (x^n, y^n) \in \typeclass{\cX \times \cY}{n}{\Q_{X} \cdot \P_{Y|X}}} \left(1 - \frac{1}{5}\right) \Prob_{Y^n | X^n, \Theta}(y^n |x^n, \theta)\\
			& \ge \frac{1}{10^4} \sum_{y^n : (x^n, y^n) \in \typeclass{\cX \times \cY}{n}{\Q_{X} \cdot \P_{Y|X}}} \Prob_{Y^n | X^n, \Theta}(y^n |x^n, \theta) \stackrel{(d)}{\ge} \frac{1}{10^4} 2^{-n\left(\klc{\P^\theta_{Y|X}}{\P^\theta_{Y|X}}{\Q_X} + |\cX \times \cY|\frac{\log{2n}}{n}\right)},
		\end{align*}
		In (a) we use the the fact that $\Prob_{Y^n | X^n, \Theta}(y^n |x^n, \theta)$ is the same for all $(x^n, y^n) \in \typeclass{\cX \times \cY}{n}{\Q_{X} \cdot \P_{Y|X}}$, which we have respresented by $t$.
		(b) follows from Jensen's inequality, (c) from~\eqref{eqn:bound-alpha-converse}, and, finally, (d) follows from Lemma~\ref{lem:kld-prob}.
		Substituting for $2 \mu_n$ and using the using the above bound, we get,
		\begin{align*}
			& 2 \cdot 2^{-n \cdot (\expp - \tau')} \ge 2 \mu_n \ge \frac{1}{10^4} 2^{-n \left(|\cX \times \cY|\frac{\log{2n}}{n}\right)}.
		\end{align*}
		For large values of $n$, this is a contradiction. Thus we have established that for all $x^n \in R$, $\Prob_{\binfun_A^n | X^n, \Theta}(\theta | x^n, \theta) \ge \frac{95}{100}$.

		For $x^n \in R$, suppose 
		\begin{align}
			\frac{\sum_{y^n : (x^n, y^n) \in \typeclass{\cX \times \cY}{n}{\Q_{XY}}} \Prob(\binfun_A = \theta | x^n, y^n)}{|\{y^n : (x^n, y^n) \in \typeclass{\cX \times \cY}{n}{\Q_{XY}}\}|} < \frac{80}{100}.\label{eqn:conv-clm2-assumption}
		\end{align}
		Using an argument similar to the one used in showing~\eqref{eqn:bound-alpha-converse},
		\begin{align}
			& \sum_{y^n : (x^n, y^n) \in \typeclass{\cX \times \cY}{n}{\Q_{XY}}} \Prob_{Y^n | \binfun_A^n, X^n, \Theta}(y^n |\theta, x^n, \theta) \nonumber\\
			= & \sum_{y^n : (x^n, y^n) \in \typeclass{\cX \times \cY}{n}{\Q_{XY}}} \Prob_{Y^n | X^n, \Theta}(y^n |x^n, \theta) \frac{\sum_{y^n : (x^n, y^n) \in \typeclass{\cX \times \cY}{n}{\Q_{XY}}}\Prob_{\binfun_A^n | X^n, Y^n, \Theta} (\theta | x^n, y^n, \theta)}{\Prob_{\binfun_A^n | X^n, \Theta}(\theta | x^n, \theta) \cdot |\{y^n : (x^n, y^n) \in \typeclass{\cX \times \cY}{n}{\Q_{XY}}\}|} \nonumber\\
			\stackrel{(a)}{\le} & \frac{80}{95}\sum_{y^n : (x^n, y^n) \in \typeclass{\cX \times \cY}{n}{\Q_{XY}}} \Prob_{Y^n | X^n, \Theta}(y^n |x^n, \theta).
		\end{align}
		In (a), we used the bound $\Prob_{\binfun_A^n | X^n, \Theta}(\theta | x^n, \theta) \ge \frac{95}{100}$ and our assumption~\eqref{eqn:conv-clm2-assumption}.

		By the privacy condition (ii), we have,
		\begin{align*}
			2 \mu_n & \ge \sum_{i \in \{0, 1\}} \Prob_{\binfun_A^n | X^n, \Theta} (i | x^n, \theta) \cdot \tvd{\dist{Y^n|x^n, \theta}}{\dist{Y^n|\binfun_A^n = i, x^n, \theta}}\\
			& \ge \Prob_{\binfun_A^n | X^n, \Theta} (\theta | x^n, \theta) \cdot \tvd{\dist{Y^n|x^n, \theta}}{\dist{Y^n|\binfun_A^n = \theta, x^n, \theta}}\\
			& \ge \frac{95}{100} \cdot \frac{1}{2}\sum_{y^n \in \cY^n} \left| \Prob_{Y^n | X^n, \Theta}(y^n | x^n, \theta) - \Prob_{Y^n | \binfun_A^n, X^n, \Theta}(y^n | \binfun_A^n = \theta, x^n, \theta)\right|\\
			& \ge \frac{95}{200} \sum_{y^n : (x^n, y^n) \in \typeclass{\cX \times \cY}{n}{\Q_{XY}}} \left| \Prob_{Y^n | X^n, \Theta}(y^n | x^n, \theta) - \Prob_{Y^n | \binfun_A^n, X^n, \Theta}(y^n | \theta, x^n, \theta)\right|\\
			& \ge \frac{95}{100} \sum_{y^n : (x^n, y^n) \in \typeclass{\cX \times \cY}{n}{\Q_{X} \cdot \P_{Y|X}}} \left(1 - \frac{80}{95}\right) \Prob_{Y^n | X^n, \Theta}(y^n |x^n, \theta)\\
			& \ge \frac{1}{10^4} \sum_{y^n : (x^n, y^n) \in \typeclass{\cX \times \cY}{n}{\Q_{X} \cdot \P_{Y|X}}} \Prob_{Y^n | X^n, \Theta}(y^n |x^n, \theta) \ge \frac{1}{10^4} 2^{-n\left(\klc{\Q_{Y|X}}{\P^\theta_{Y|X}}{\Q_X} + |\cX \times \cY|\frac{\log{2n}}{n}\right)}.
		\end{align*}
		Substituting for $2 \mu_n$ and using the using the above bound, we get,
		\begin{align*}
			& 2 \cdot 2^{-n \cdot (\expp - \tau')} \ge 2 \mu_n \ge \frac{1}{10^4} 2^{-n \left(\expp - \tau + |\cX \times \cY|\frac{\log{2n}}{n}\right)}.
		\end{align*}
		Since $\tau > \tau'$, for large values of $n$, this is a contradiction.
		Thus we have established, as required by the claim, that for all $x^n \in R$,
		\begin{align*}
			\frac{\sum_{y^n : (x^n, y^n) \in \typeclass{\cX \times \cY}{n}{\Q_{XY}}} \Prob(\binfun_A = \theta | x^n, y^n)}{|\{y^n : (x^n, y^n) \in \typeclass{\cX \times \cY}{n}{\Q_{XY}}\}|} \ge \frac{80}{100}.
		\end{align*}
	\end{proof}
	If there exists $\Q_{XY}$ that satisfies the inequalities in~\eqref{eqn:char-0}, then by Claim~\ref{clm:binfun-seq-corr}, for large enough $n$, there exists $x^n$ such that for $\theta = 0$ and $1$,
	\begin{align*}
		\frac{\sum_{y^n : (x^n, y^n) \in \typeclass{\cX \times \cY}{n}{\Q_{XY}}} \Prob(\binfun_A = \theta | x^n, y^n)}{|\{y^n : (x^n, y^n) \in \typeclass{\cX \times \cY}{n}{\Q_{XY}}\}|} \ge \frac{99}{100}.
	\end{align*}
	This is a contradiction, proving the necessity of condition (i) in the theorem.

	If there exists $\Q_{XY}$ that satisfies the inequalities in~\eqref{eqn:char-2-A}, by Claim~\ref{clm:binfun-seq-priv}, for large enough $n$, there exists $x^n \in \Q_X$ such that for $\theta = 0$ and $1$,
	\begin{align*}
		\frac{\sum_{y^n : (x^n, y^n) \in \typeclass{\cX \times \cY}{n}{\Q_{XY}}} \Prob(\binfun_A = \theta | x^n, y^n)}{|\{y^n : (x^n, y^n) \in \typeclass{\cX \times \cY}{n}{\Q_{XY}}\}|} \ge \frac{80}{100}.
	\end{align*}
	This is also a contradiction, proving the necessity of Condition (ii).

	If there exists $\Q_{XY}$ that satisfies the inequalities in~\eqref{eqn:char-1-A}, there exists $x^n$ and large enough $n$ such that by Claim~\ref{clm:binfun-seq-corr},
	\begin{align*}
		\frac{\sum_{y^n : (x^n, y^n) \in \typeclass{\cX \times \cY}{n}{\Q_{XY}}} \Prob(\binfun_A = \theta | x^n, y^n)}{|\{y^n : (x^n, y^n) \in \typeclass{\cX \times \cY}{n}{\Q_{XY}}\}|} \ge \frac{99}{100},
	\end{align*}
	and by Claim~\ref{clm:binfun-seq-priv},
	\begin{align*}
		\frac{\sum_{y^n : (x^n, y^n) \in \typeclass{\cX \times \cY}{n}{\Q_{XY}}} \Prob(\binfun_A = 1 - \theta | x^n, y^n)}{|\{y^n : (x^n, y^n) \in \typeclass{\cX \times \cY}{n}{\Q_{XY}}\}|} \ge \frac{80}{100}.
	\end{align*}
	This is again a contradiction.
	This proves the necessity of Condition (iv).

	Note that the claims work only for distributions $\Q_{XY}$ with rational \emph{p.d.f}.
	But for $\Q_{XY}$ with irrational \emph{p.d.f}, we may appeal to continuity of KL divergence to get a distribution $\Q'_{XY}$ with rational \emph{p.d.f} that is arbitrarily close to $\Q_{XY}$.
	This proves the lemma.
\end{proof}

\section{Achievability: Missing Proofs}

\begin{proof}[Proof of Claim~\ref{clm:ach}]
For our sequence of decision functions $(\binfun_A^n, \binfun_B^n)_{n \in \bbN}$, we derive $(\delta_n, \mu_n)_{n \in \bbN}$ such that for all $n \in \bbN$ and all $\theta \in \{0,1\}$,
\begin{align}
	&\Prob_{\binfun_A^n | \Theta} (1 - \theta | \theta) \le \delta_n,\; \Prob_{\binfun_B^n | \Theta} (1 - \theta | \theta) \le \delta_n,\label{eqn:correctness}\\
	&\Prob\left(\tvd{\dist{Y^n | x^n}}{\dist{Y^n | x^n, \binfun_A^n, \theta}} \ge \mu_n \right) \le \mu_n, \forall x^n,\label{eqn:privacy_A}\\
	&\Prob\left(\tvd{\dist{X^n | y^n}}{\dist{X^n | y^n, \binfun_B^n, \theta}} \ge \mu_n \right) \le \mu_n, \forall y^n,
\end{align}
and
\begin{align}
	\expc = \lim \limits_{n \rightarrow \infty} -\frac{1}{n} \log{\delta_n},\label{eqn:ach-corr}\\
	\expp = \lim \limits_{n \rightarrow \infty} -\frac{1}{n} \log{\mu_n}.\label{eqn:ach-priv}
\end{align}
	$\binfun_A^n$ is defined such that for $\theta = 0, 1$,
	\begin{align*}
		\Prob_{\binfun_A | \Theta} (1 - \theta | \theta) \le \Prob_{X^n, Y^n | \Theta = \theta}((X^n, Y^n) \in \typeclass{\cX \times \cY}{n}{\Q_{XY}} \text{ s.t. } \kl{\Q_{XY}}{\P^\theta_{XY}} > \expc | \theta). 	
	\end{align*}
	We now appeal to a weak version of Sanov's theorem which we quote verbatim from \cite{CoverT06}.
	\begin{thm}[Theorem 12.2.1, \cite{CoverT06}]\label{thm:weak-sanov}
		Let $X_1, \ldots, X_n$ be \emph{i.i.d.} $\sim \P_X$. Then, for $\epsilon > 0$,
		\begin{align*}
			\Prob_{X^n}(\kl{\type{X^n}}{\P_X} > \epsilon) \le 2^{-n(\epsilon - |\cX|\frac{\log{(n+1)}}{n})}.
		\end{align*}
	\end{thm}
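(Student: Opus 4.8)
The plan is the standard method-of-types argument: partition the atypical event $\{\kl{\type{X^n}}{\P_X} > \epsilon\}$ into type classes, bound the $\Prob_{X^n}$-mass of each type class, and then multiply by a polynomial bound on the number of types. First I would record the exact exponential weight of a single type class. For any type $\Q_X \in \types{\cX}{n}$ and any $x^n \in \typeclass{\cX}{n}{\Q_X}$, writing the i.i.d.\ product over the symbol counts gives $\Prob_{X^n}(x^n) = \prod_{a \in \cX} \P_X(a)^{n\Q_X(a)} = 2^{-n\left(H(\Q_X) + \kl{\Q_X}{\P_X}\right)}$, a quantity that depends on $x^n$ only through its type. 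Multiplying by the cardinality bound $\left|\typeclass{\cX}{n}{\Q_X}\right| \le 2^{nH(\Q_X)}$ from Theorem~\ref{thm:cover-typeclass} yields
\begin{align*}
	\Prob_{X^n}\left(\typeclass{\cX}{n}{\Q_X}\right) = \left|\typeclass{\cX}{n}{\Q_X}\right| \cdot 2^{-n\left(H(\Q_X) + \kl{\Q_X}{\P_X}\right)} \le 2^{-n \kl{\Q_X}{\P_X}}.
\end{align*}
(This is exactly the upper half of Theorem~11.1.4 in~\cite{CoverT06}, which one could also simply quote.)

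Next I would observe that, since $\type{X^n}$ is a deterministic function of $X^n$, the event $\left\{\kl{\type{X^n}}{\P_X} > \epsilon\right\}$ equals the disjoint union of the type classes $\typeclass{\cX}{n}{\Q_X}$ taken over the types $\Q_X \in \types{\cX}{n}$ with $\kl{\Q_X}{\P_X} > \epsilon$. Summing the per-class bound over exactly these types, using $\kl{\Q_X}{\P_X} > \epsilon$ termwise, and then invoking the bound $\left|\types{\cX}{n}\right| \le (n+1)^{|\cX|}$ on the number of types, I obtain
\begin{align*}
	\Prob_{X^n}\left(\kl{\type{X^n}}{\P_X} > \epsilon\right) = \sum_{\substack{\Q_X \in \types{\cX}{n}:\\ \kl{\Q_X}{\P_X} > \epsilon}} \Prob_{X^n}\left(\typeclass{\cX}{n}{\Q_X}\right) \le (n+1)^{|\cX|}\, 2^{-n\epsilon} = 2^{-n\left(\epsilon - |\cX|\frac{\log(n+1)}{n}\right)},
\end{align*}
which is precisely the claimed inequality.

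I do not expect a genuine obstacle: the proof is elementary given the method of types. The only two points needing any care are (i) justifying the type-class mass bound in the first display — equivalently, citing it from~\cite{CoverT06} or deriving it from Theorem~\ref{thm:cover-typeclass} as above — and (ii) the simple but essential observation that $\kl{\type{X^n}}{\P_X}$ depends on $X^n$ only through $\type{X^n}$, which is what makes the atypical event a union of whole type classes and hence lets the type-counting bound close the argument.
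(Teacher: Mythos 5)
Your proof is correct and is exactly the standard method-of-types argument from the cited source (Cover and Thomas, Theorem 12.2.1); the paper itself quotes this result verbatim without reproving it. Both the per-type-class bound $\Prob_{X^n}(\typeclass{\cX}{n}{\Q_X}) \le 2^{-n\kl{\Q_X}{\P_X}}$ and the polynomial count of types are handled properly, so nothing is missing.
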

	Hence, we have,
	\begin{align*}
		\Prob_{\binfun_A | \Theta} (1 - \theta | \theta) \le 2^{-n(\expc - |\cX \times \cY|\frac{\log{(n+1)}}{n})}.
	\end{align*}
	Using the same argument, we can show that,
	\begin{align*}
		\Prob_{\binfun_B | \Theta} (1 - \theta | \theta) \le 2^{-n(\expc - |\cX \times \cY|\frac{\log{(n+1)}}{n})}.
	\end{align*}
	Statement~\eqref{eqn:ach-corr} follows directly from this observation.

	To prove~\eqref{eqn:ach-priv}, we analyze $\binfun_A$, the analysis of $\binfun_B$ is done similarly.
	We split the analysis into three cases based on the type of $x^n \in \cX^n$; (1) $x^n$ is such that $\kl{\type{x^n}}{\P^0_X}, \kl{\type{x^n}}{\P^{1}_X} > \expc$; (2) $x^n$ is such that $\kl{\type{x^n}}{\P^\theta_X} \le \expc$ and $\kl{\type{x^n}}{\P^{1 - \theta}_X} > \expc$ for $\theta = 0$ or $1$; (3) $x^n$ is such that $\kl{\type{x^n}}{\P^0_X}, \kl{\type{x^n}}{\P^1_X} \le \expc$.

	If $\kl{\type{x^n}}{\P^\theta} > \expc$ for both $\theta = 0,1$, then $\binfun_A^n(x^n, y^n) = 0$ for all $y^n \in \cY^n$ by the definition of $\binfun_A^n$. Hence, in this case, the decision is independent of the value of $y^n$. From this, it follows that the distributions $\dist{Y^n | x^n, \Theta = \theta}$ and $\dist{Y^n | \binfun(X^n, Y^n) = 0, x^n, \Theta = \theta}$ are identical.

	Case (2) can also be analyzed similarly. If $x^n$ is such that $\kl{\type{x^n}}{\P^0} \le \expc$ and $\kl{\type{x^n}}{\P^1} > \expc$, then $\binfun_A^n(x^n, y^n) = 0$ for all $y^n \in \cY^n$ by the definition of $\binfun_A^n$. Hence, in this case also, the decision is independent of the value of $y^n$. From this, it follows that the distributions $\dist{Y^n | x^n, \Theta = \theta}$ and $\dist{Y^n | x^n, \binfun(X^n, Y^n) = 0, \Theta = \theta}$ are identical.	If $x^n$ is such that $\kl{\type{x^n}}{\P^1} \le \expc$ and $\kl{\type{x^n}}{\P^0} > \expc$, then $\binfun_A^n(x^n, y^n) = 1$ for all $y^n \in \cY^n$ by the definition of $\binfun_A^n$, hence the same analysis applies.

	For case (3), we show that for all $x^n$,
	\begin{align*}
		\Prob(\tvd{\dist{Y^n | x^n, \Theta = 0}}{\dist{Y^n | x^n, \binfun_A^n(X^n, Y^n), \Theta = 0}}) \ge \mu_n) \le \mu_n,
	\end{align*}
	such that $(\mu_n)_{n \in \mathbb{N}}$ satisfies Condition~\eqref{eqn:ach-priv}.
	The case where $\Theta = 1$ can be shown similarly.
	For $i = 0, 1$, define
	\begin{align*}
		S_i = \{y^n \in \cY^n : \Prob_{Y^n | X^n, \Theta}(y^n | x^n, 0) > \Prob_{Y^n | X^n, \binfun_A^n(X^n, Y^n), \Theta}(y^n | x^n, i, 0)\}.
	\end{align*}
	Since $\binfun_A^n$ is a deterministic function, for all $y^n \in \cY^n$
	\begin{align}\label{eqn:acc-rej}
		\Prob_{Y^n | X^n, \binfun_A^n(X^n, Y^n), \Theta}(y^n | x^n, i, 0) = \begin{cases}
			\frac{\Prob_{Y^n | X^n, \Theta}(y^n | x^n, 0)}{\sum_{y^n : \binfun_A^n(x^n, y^n) = i}\Prob_{Y^n | X^n, \Theta}(y^n | x^n, 0)}, \forall y^n \text{ s.t. } \binfun_A^n(x^n, y^n) = i,\\
			0,  \forall y^n \text{ s.t. } \binfun_A^n(x^n, y^n) = 1 - i.
		\end{cases}
	\end{align}
	Hence $y^n \in S_0$ only if $\binfun_A^n(x^n, y^n) = 1$.
	Then, by expanding the total variation distance, we get
	\begin{align}
		& \tvd{\dist{Y^n | x^n, \Theta = 0}}{\dist{Y^n | x^n, \binfun_A^n(X^n, Y^n) = 0, \Theta = 0}}\nonumber\\
		= & \sum_{y^n \in S_0} \left| \Prob_{Y^n | X^n, \Theta}(y^n | x^n, 0) - \Prob_{Y^n | X^n, \binfun_A^n(X^n, Y^n), \Theta}(y^n | x^n, 0, 0)\right|\nonumber\\
		\stackrel{(a)}{\le} & \sum_{y^n : \binfun_A(x^n, y^n) = 1} \left| \Prob_{Y^n | X^n, \Theta}(y^n | x^n, 0) - \Prob_{Y^n | X^n, \binfun_A^n(X^n, Y^n), \Theta}(y^n | x^n, 0, 0)\right|\nonumber\\
		\stackrel{(b)}{\le} & \sum_{y^n : \binfun_A(x^n, y^n) = 1} \Prob_{Y^n | X^n, \Theta}(y^n | x^n, 0).\label{eqn:ach-tvd-bound1}
	\end{align}
	Here, (a) is true since $y^n \in S_0$ only if $\binfun_A^n(x^n, y^n) = 1$ and (b) follows from~\eqref{eqn:acc-rej}.

	By the definition of $\binfun_A^n$, when $x^n$ is such that $\kl{\type{x^n}}{\P^0} \le \expc$, $\binfun_A^n(x^n, y^n) = 1$ only if $\klc{\Q_{Y|X}}{\P^0_{Y|X}}{\Q_X} > \expp$, where $\type{(x^n, y^n)} = \Q_{XY}$.
	Define subset of types $C = \{\Q_{XY} \in \types{\cX \times \cY}{n}: \type{x^n} = \Q_X \text{ and } \klc{\Q_{Y|X}}{\P^0_{Y|X}}{\Q_X} > \expp\}$. Then,
	\begin{align}
		& \sum_{y^n : \binfun_A(x^n, y^n) = 1} \Prob_{Y^n | X^n, \Theta}(y^n | x^n, 0) \le \sum_{\Q_{XY} \in C} \sum_{y^n : \type{(x^n, y^n)} = \Q_{XY}} \Prob_{Y^n | X^n, \Theta}(y^n | x^n, 0) \nonumber\\
		\stackrel{(a)}{\le} & \sum_{\Q_{XY} \in C} 2^{-n \left(\klc{Q_{Y|X}}{P^0_{Y|X}}{\Q_{X}}\right)} \stackrel{(b)}{\le} (n + 1)^{|\cX \times \cY|} \cdot 2^{-n \expp}.\label{eqn:ach-tvd-bound2}
	\end{align}
	Here, (a) follows from Lemma~\ref{lem:kld-prob}, and (b) follows from the fact that for all $\Q_{XY} \in C$, $\klc{\Q_{Y|X}}{\P^0_{Y|X}}{\Q_X} > \expp$ and $|C| \le |\types{\cX \times \cY}{n}| \le (n + 1)^{|\cX \times \cY|}$. Hence we have,
	From the inequalities~\eqref{eqn:ach-tvd-bound1}~and~\eqref{eqn:ach-tvd-bound2}, we get
	\begin{align*}
		& \tvd{\dist{Y^n | x^n, \Theta = 0}}{\dist{Y^n | x^n, \binfun_A^n(X^n, Y^n) = 0, \Theta = 0}} \le
		(n + 1)^{|\cX \times \cY|} \cdot 2^{-n \expp}.
	\end{align*}
	Hence,
	\begin{multline*}
		\Prob\left(\tvd{\dist{Y^n | x^n, \Theta = 0}}{\dist{Y^n | x^n, \binfun_A^n(X^n, Y^n), \Theta = 0}} \ge 2 (n + 1)^{|\cX \times \cY|}\cdot 2^{-n \expp}\right)\\
		\le \Prob_{\binfun_A(X^n, Y^n) | X^n, \Theta}(1 | x^n, 0) \stackrel{(a)}{=} \sum_{y^n : \binfun_A(x^n, y^n) = 1} \Prob_{Y^n | X^n, \Theta}(y^n | x^n, 0)
		\stackrel{(b)}{\le} (n + 1)^{|\cX \times \cY|} \cdot 2^{-n \expp}.
	\end{multline*}
	In (a), we used the fact that $\binfun_A$ is a deterministic function of $x^n, y^n$ and (b) is already shown in~\eqref{eqn:ach-tvd-bound2}.
	This proves the claim when we set $\mu_n = 2(n+1)^{|\cX \times \cY|} \cdot 2^{-n\expp}$.

\end{proof}

\begin{proof}[Proof of Theorem~\ref{mpc-ot}]
	$\prot$ is essentially a \emph{perfectly secure protocol} that computes $\binfun_A(x^n, y^n)$ and $\binfun_B(x^n, y^n)$ at Alice and Bob, respectively, when Alice and Bob have $x^n$ and $y^n$, respectively, as inputs.
	A well known result in secure multi-party computation states that two parties can \emph{securely compute} any pair of (possibly randomized) functions of their combined inputs provided that they have access to sufficiently many copies of OT correlations~\cite[Section 2.4]{EKRTextbook}~and~\cite{Kilian88}.
	We state a version of this fact as the following theorem.
	We will show that the protocol $\prot$ described in the theorem below satisfies the conditions in our claim.
	\begin{thm}
		Let $\binfun_A, \binfun_B : \cX \times \cY \rightarrow \{0, 1\}$ be a pair of randomized boolean functions. For sufficiently large $k$, when $W_A, W_B$ consists of $k$ copies of OT correlations, there exists a protocol $\prot$, with the following guarantees.
		\begin{align*}
			\dist{\dec_A | X = x, Y = y} \equiv \dist{\binfun_A(x, y)}, \text{ and }
			\dist{\dec_B | X = x, Y = y} \equiv \dist{\binfun_B(x, y)}, \forall x, y,\\
			\dist{V_A | X = x, \dec_A = i} \equiv \dist{V_A | X = x, \dec_A = i, Y = y}, \forall x \in \cX, i \in \{0,1\} \text{ and } y \in \cY \text{ s.t. } \Prob_{\binfun_A(X, Y) | X, Y} (i | x, y) > 0,\\
			\dist{V_B | Y = y, \dec_B = i} \equiv \dist{V_B | Y = y, \dec_B = i, X = x}, \forall x \in \cX, i \in \{0,1\} \text{ and } y \in \cY \text{ s.t. } \Prob_{\binfun_A(X, Y) | X, Y} (i | x, y) > 0.
		\end{align*}
	\end{thm}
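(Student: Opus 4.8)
This theorem is the familiar statement that oblivious transfer is complete for perfectly (information-theoretically) secure honest-but-curious two-party computation, and the three asserted identities are precisely the correctness and the simulation-based privacy guarantees of such a protocol. The plan is to reduce the stated randomized functionality to a deterministic circuit, instantiate the Goldreich--Micali--Wigderson (GMW) protocol in the OT-hybrid model, implement each OT call from one OT correlation, and then read off the three identities from the protocol's correctness and its simulators.

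First I would set up the functionality. Let $F$ be the two-party functionality that, on Alice's input $x \in \cX$ and Bob's input $y \in \cY$, draws $(a,b) \in \{0,1\}^2$ according to a fixed coupling of $\binfun_A(x,y)$ and $\binfun_B(x,y)$, and outputs $a$ to Alice and $b$ to Bob. Since $\cX$ and $\cY$ are finite, $F$ is computed by a Boolean circuit acting on $(x,y)$ and a uniformly random string $\rho$ of some fixed length; to avoid assuming common randomness I would have Alice sample a uniform $\rho_A$, Bob a uniform $\rho_B$, set $\rho = \rho_A \oplus \rho_B$, and regard $(x,\rho_A)$ and $(y,\rho_B)$ as the augmented inputs to the resulting deterministic two-output function $\widetilde{F}$. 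Let $s$ be the number of AND gates (equivalently, the number of ideal OT invocations) in a circuit for $\widetilde{F}$.

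Next I would instantiate the protocol. By the completeness of OT for perfectly secure honest-but-curious computation --- the GMW construction; see \cite[Section~2.4]{EKRTextbook} and \cite{Kilian88} --- there is a protocol realizing $\widetilde{F}$ with perfect honest-but-curious security in the OT-hybrid model using at most $s$ ideal (chosen-input) OT calls. Each such call is produced from one fresh OT correlation $W_A = (R_0,R_1)$, $W_B = (B, R_B)$ of Definition~\ref{def:ot} by Beaver's derandomization, which is perfect and uses only a constant number of additional transmitted bits. Hence for every $k \ge s$, when $W_A, W_B$ consist of $k$ independent copies of the OT correlation, there is a protocol $\prot$ realizing $\widetilde{F}$ with perfect security; Alice outputs her bit as $\dec_A$ and Bob his as $\dec_B$. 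Perfect correctness of $\prot$ gives $\dist{\dec_A \mid X = x, Y = y} \equiv \dist{\binfun_A(x,y)}$ and $\dist{\dec_B \mid X = x, Y = y} \equiv \dist{\binfun_B(x,y)}$ for all $x,y$, which is the first line of the theorem.

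Finally I would extract the view identities from the simulators. Perfect security provides a simulator $\mathrm{Sim}_A$ with $\dist{V_A \mid (X,\rho_A) = (x, r_A),\ (Y,\rho_B) = (y, r_B)} \equiv \dist{\mathrm{Sim}_A((x, r_A), a)}$ whenever $a$ is the bit $\widetilde{F}$ delivers to Alice on these augmented inputs; in particular, conditioned on $\dec_A = i$, this holds for every $y$ with $\Prob_{\binfun_A(X,Y) \mid X, Y}(i \mid x, y) > 0$. Mixing over $\rho_A, \rho_B$ and using that, for fixed $(x,y,i)$, the number of $\rho_B$ consistent with $\dec_A = i$ does not depend on $\rho_A$ --- so the $y$-dependent normalization cancels the probability of $\dec_A = i$ --- one finds that $\dist{V_A \mid X = x, Y = y, \dec_A = i}$ equals the $\rho_A$-average of $\dist{\mathrm{Sim}_A((x,\rho_A), i)}$, which has no dependence on $y$; since $\dist{V_A \mid X = x, \dec_A = i}$ is a mixture over $y$ of these identical distributions, it too equals that average. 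Equating the two gives the second line, and the symmetric argument for Bob gives the third. The only point demanding care is this randomized-to-deterministic reduction --- ensuring that the injected shared randomness leaves no exploitable trace in the views beyond what the output already reveals --- together with the bookkeeping that $k$ need only exceed the OT-gate count of the circuit; the cancellation just indicated shows neither is a genuine obstacle.
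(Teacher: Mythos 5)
The paper does not actually prove this statement: it is quoted inside the proof of Theorem~\ref{mpc-ot} as ``a version of'' the well-known completeness of OT for perfectly secure honest-but-curious two-party computation, with the proof delegated entirely to the cited references (\cite[Section 2.4]{EKRTextbook}, \cite{Kilian88}). Your proposal therefore goes strictly further than the paper by supplying the construction and the derivation of the three identities. Your route is the standard one and it is sound: reduce the randomized functionality to a deterministic two-output function $\widetilde{F}$ by XOR-sharing the sampling randomness $\rho = \rho_A \oplus \rho_B$, run GMW in the OT-hybrid model, realize each ideal OT call from one OT correlation via Beaver derandomization, and read off correctness from perfect correctness and the view identities from the perfect simulators. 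The one step that genuinely needs the care you give it is the passage from simulation security of the deterministic $\widetilde{F}$ to the conditional-independence form $\dist{V_A \mid X = x, \dec_A = i} \equiv \dist{V_A \mid X = x, \dec_A = i, Y = y}$: since $a$ depends on $(\rho_A, \rho_B)$ only through $\rho_A \oplus \rho_B$, the count of $\rho_B$ consistent with $\dec_A = i$ is independent of $\rho_A$, so the posterior on $\rho_A$ given $(x, y, \dec_A = i)$ stays uniform and the mixture of simulated views loses all dependence on $y$; your cancellation argument captures exactly this. What the paper's citation-only treatment buys is brevity; what your argument buys is an explicit bound on the required number $k$ of OT correlations (the AND-gate count of a circuit for $\widetilde{F}$) and a verification that the randomized-to-deterministic reduction does not leak through the views --- a point the bare citation leaves implicit. (Minor note: the third displayed condition in the statement should presumably quantify over $\Prob_{\binfun_B(X,Y) \mid X, Y}(i \mid x, y) > 0$ rather than $\binfun_A$; your symmetric argument for Bob implicitly uses the corrected form.)
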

	Let $v_A \in \cV_A$ such that $\decfun_A(v_A) = i$, and $x$ and $w_A$ are part of the view $v_A$, then
	\begin{align*}
		\Prob_{Y | V_A, X, W_A, \Theta}(y | v_A, x, w_A, \theta) & = \Prob_{Y | V_A, X, W_A, \dec_A, \Theta}(y | v_A, x, w_A, i, \theta)\\
		& = \frac{\Prob_{Y | X, \dec_A, \Theta}(y | x, i, \theta) \cdot \Prob_{V_A, W_A | Y, X, \dec_A, \Theta}(v_A, w_A | y, x, i, \theta)}{\Prob_{V_A, W_A | X, \dec_A, \Theta}(v_A, w_A | x, i, \theta)}\\
		& = \frac{\Prob_{Y | X, \dec_A, \Theta}(y | x, i, \theta) \cdot \Prob_{V_A | Y, X, \dec_A, \Theta}(v_A | y, x, i, \theta)}{\Prob_{V_A | X, \dec_A, \Theta}(v_A | x, i, \theta)}\\
		& \stackrel{(a)}{=} \Prob_{Y | X, \dec_A, \Theta}(y | x, i, \theta) = \Prob_{Y | X, \binfun_A, \Theta}(y | x, i, \theta)
	\end{align*}
	The above theorem guarantees that for all $v_A, x, y$ and $i$, $\Prob_{V_A | Y, X, \dec_A, \Theta}(v_A | y, x, i, \theta) = \Prob_{V_A | X, \dec_A, \Theta}(v_A | x, i, \theta)$, (a) follows from this observation.
	This proves the theorem.
\end{proof}

\end{appendices}
\end{document}